\documentclass[smallextended]{svjour3}       
\pdfoutput=1
\smartqed  
\usepackage[authoryear]{natbib}
\usepackage{amssymb,amsmath,pdfpages}
\usepackage[all]{xy}

\journalname{Journal of Mathematical Biology}

\newcommand{\bN}{\mathbb{N}}

\newcommand{\bR}{\mathbb{R}}

\newcommand{\bE}{\mathbb{E}}

\newcommand{\cH}{\mathcal{H}}

\newtheorem{hypothesis}{Hypothesis}

\newcommand{\grant}{EPSRC grant EP/H031936/1 and BBSRC grant BB/L009579/1}


\begin{document}

\title{Monotonicity of Fitness Landscapes and Mutation Rate Control\thanks{This work was supported by \grant}}

\author{Roman V. Belavkin \and
Alastair Channon \and
Elizabeth Aston \and
John Aston \and
Rok Kra{\v s}ovec \and
Christopher G. Knight}

\institute{Roman V. Belavkin
\at School of Science and Technology, Middlesex University, London NW4 4BT, UK\\\email{r.belavkin@mdx.ac.uk}
\and
Alastair Channon \and Elizabeth Aston
\at School of Computing and Mathematics, Keele University, ST5 5BG, UK\\ \email{a.d.channon@keele.ac.uk, e.j.aston@keele.ac.uk}
\and
John A. D. Aston
\at Statslab, University of Cambridge, CB3 0WB, UK\\\email{j.aston@statslab.cam.ac.uk}
\and
Rok Kra{\v s}ovec \and Christopher G. Knight
\at Faculty of Life Sciences, University of Manchester, M13 9PT, UK\\\email{rok.krasovec@manchester.ac.uk, chris.knight@manchester.ac.uk}}

\date{Received: 18 December 2014 / Accepted: 12 April 2016}

\maketitle



\begin{abstract}
A common view in evolutionary biology is that mutation rates are minimised. However, studies in combinatorial optimisation and search have shown a clear advantage of using variable mutation rates as a control parameter to optimise the performance of evolutionary algorithms.  Much biological theory in this area is based on Ronald Fisher's work, who used Euclidean geometry to study the relation between mutation size and expected fitness of the offspring in infinite phenotypic spaces.  Here we reconsider this theory based on the alternative geometry of discrete and finite spaces of DNA sequences.  First, we consider the geometric case of fitness being isomorphic to distance from an optimum, and show how problems of optimal mutation rate control can be solved exactly or approximately depending on additional constraints of the problem.  Then we consider the general case of fitness communicating only partial information about the distance.  We define weak monotonicity of fitness landscapes and prove that this property holds in all landscapes that are continuous and open at the optimum.  This theoretical result motivates our hypothesis that optimal mutation rate functions in such landscapes will increase when fitness decreases in some neighbourhood of an optimum, resembling the control functions derived in the geometric case.  We test this hypothesis experimentally by analysing approximately optimal mutation rate control functions in 115 complete landscapes of binding scores between DNA sequences and transcription factors.  Our findings support the hypothesis and find that the increase of mutation rate is more rapid in landscapes that are less monotonic (more rugged). We discuss the relevance of these findings to living organisms.

\keywords{Adaptation \and Fitness landscape \and Mutation rate \and Population genetics}
\subclass{%
05B25 \and
26A48 \and
68W20 \and
68T05 \and
92B20 \and
93E35 \and
93B27 \and
}
\end{abstract}


\section{Introduction}
\label{sec:intro}

Mutation is one of the most important biological processes that influence evolutionary dynamics.  During replication mutation leads to a loss of information between the offspring and its parent, but it also allows the offspring to acquire new features. These features are likely to be deleterious, but have the potential to be beneficial for adaptation.  Thus, mutation can be seen as a process of innovation, which is particularly important as the number of all living organisms is tiny relative to the number of all possible organisms.  A question that naturally arises with regards to mutation is whether there is an optimal balance between the amount of information lost and potential fitness gained.

The seminal mathematical work to investigate biological mutation is by Ronald \citet{Fisher30}, who considered mutation as a random motion in Euclidean space, the points of which are vectors representing collections of phenotypic traits of organisms.  Using the geometry of Euclidean space, Fisher showed that probability of adaptation decreases exponentially as a function of mutation size (defined using the ratio of mutation radius and distance to the optimum), and concluded, therefore, that adaptation is more likely to occur by small mutations.  Several studies, however, suggested that large mutations can be quite frequent in nature, thereby prompting re-examination of the theory \citep{Orr05}.  Thus, \citet{Kimura80} extended the theory to take into account differences in probabilities of fixation for mutations of small and large size.  Subsequently \citet{Orr98} considered the effect of mutation across several replications.  Interestingly, while Fisher had a critical role in developing mathematical theory around discrete alleles, in his geometric model he used Euclidean space of traits as the domain of mutation, which is uncountably infinite and unbounded.  This important issue only became apparent after the realisation that biological evolution occurs in a countable or even finite space of discrete molecular sequences \citep{Smith70}.  However, subsequent geometric models based on Fisher's, while explicitly modelling discrete mutational steps \citep[e.g.][]{Orr02}, continue to assume that they occur within the same infinite Euclidean space.  This issue may contribute to the fact that the predictions of such models have at best only been partially verified in actual biological systems \citep{McDonald11,Bataillon11,Kassen06,Rokyta08}.  In this and previous work, we consider mutation using the geometry and combinatorics of Hamming spaces \citep{Belavkin_etal11:_ecal11,Belavkin11:_itw11}, which are finite, and this leads to a radically different view about the role of large mutations.


Independent of such biological concerns, researchers in evolutionary computation and operations research have a long history of considering variable mutation rates in genetic algorithms (GAs) \citep[e.g. see][for reviews]{Eiben_etal99,Ochoa02,Falco_etal02,Cervantes-Stephens06,Vafaee_etal10}.  In particular, \citet{Ackley87} suggested that mutation probability is analogous to temperature in simulated annealing, which decreases with time through optimisation.  A gradual reduction of mutation rate was also proposed by \citet{Fogarty89}.   Markov chain analysis of GAs was used by \citet{Yanagiya93} to show that a sequence of optimal mutation rates maximising the probability of obtaining the global solution exists in any problem.  In particular, \citet{Back93} studied the probability of adaptation in the space of binary strings and derived optimal mutation rates depending on the distance from the global optimum.  More recently, numerical methods have been used to optimise a mutation operator \citep{Vafaee_etal10} that was based on the Markov chain model of GA by \citet{Nix-Vose92}, although the complexity of this method may restrict its application to small spaces and populations.  More recently, several authors have analysed the run-time of the co-called $(1+1)$-evolutionary algorithm using constant and adaptive mutation rates and demonstrating some advantages of the latter \citep{Suntje_etal2010,Sutton_etal2011}.  Thus, the idea of using variable mutation rates to optimise evolutionary dynamics is not new.  Unfortunately, these results in the field of evolutionary computation (EC) have specific computational focus, which limits their appeal for biology.

First, theoretical work on EC has focused almost exclusively on systems of binary strings.  Optimisation of mutation rates of DNA strings, which have the alphabet of four bases, involves analysis of a significantly more difficult combinatorics and geometry.  Previously, we presented some results on optimal mutation rates \citep{Belavkin_etal11:_ecal11,Belavkin11:_itw11}, which used formula~(\ref{eq:h-intersection}) for the intersection of spheres in general Hamming spaces.  Here we give the derivation of this formula in Appendix~\ref{sec:spheres} and show how it can be used to generalise Fisher's geometric model of adaptation in Section~\ref{sec:geometry}.

Second, the run-time analysis and optimisation of evolutionary algorithms is concerned with their long term behaviour, which may have little relevance for biological systems.  For example, \citet{Suntje_etal2010} show that the run time of the $(1+1)$-evolutionary algorithm is on the order of $l^2$, where $l$ is the length of a binary string.  In biological organisms, the typical length of DNA sequence is $l\in[10^8,10^{11}]$ (and the alphabet size is $\alpha=4$).  Assuming the minimum of 20 minutes between replications, the run-time of order $l^2$ will significantly exceed $10^{14}$ years --- the estimated time after which stars will cease to exist in the Universe \citep{Adams-Laughlin97}.  Moreover, biological landscapes may fail to have a global optimum to converge to, because the set of all DNA sequences with variable lengths is infinite.  In addition, biological landscapes are not static, and change on a regular basis.  Thus, the short-term behaviour, perhaps within one or several replications, is more important for optimisation of parameters in biological systems.  Here we develop these insights regarding mutation rate variation towards the particular issues presented by biological systems.

In Section~\ref{sec:geometry} we show how the problem of optimal control of mutation rate can be defined in different ways leading to different solutions.  In some cases, these solutions can be obtained analytically.  For example, in the idealised geometric model, when maximisation of fitness is equivalent to minimisation of distance to a global optimum, the optimal mutation rates can be derived as functions of the distance \citep{Belavkin11:_dyninf,Belavkin11:_qbic11}.  This, however, is not the case for more realistic landscapes, which can be rugged.  In Section~\ref{sec:meta-ga}, we address how the control functions can be obtained numerically.  Although fitness landscapes have been analysed and classified in terms of hardness for evolutionary algorithms \citep{He_etal2015}, there is no general theory about optimal mutation rates in arbitrary landscapes.  The development of such theory is the main focus and contribution of this paper.  In Section~\ref{sec:monotonic}, we consider a fitness landscape as a communication channel between fitness values and distances from a nearest optimum.  We introduce various notions of monotonicity of a fitness landscape, and discuss how these properties are related to the genotype-phenotype mapping.  The main theoretical result is a theorem about weak monotonicity of continuous landscapes, which establishes the condition for a similarity between fitness and distance to an optimum in a broad class of landscapes.  This suggests a similarity between fitness-based and distance-based optimal control functions for mutations rates.

These theoretical results allow us to formulate hypotheses about monotonicity and mutation rate control in biological fitness landscapes.  We test these hypotheses by numerically obtaining optimal mutation rate control functions for 115 published complete landscapes of transcription factor binding \citep{Badis09}.  Our results presented in Section~\ref{sec:TF-landscapes} show that all the optimal mutation rate control functions in these biological landscapes do indeed converge to non-trivial forms consistent with the theory developed here. We also observe differences among optimal mutation rate control functions, variation that relates to variation in the landscapes' monotonic properties.  We conclude in Section~\ref{sec:discussion} by discussing how mutation rate control as considered here may be manifested in living organisms.

\section{Fisher's geometric model of adaptation in Hamming space}
\label{sec:geometry}

In this section, we consider an abstract problem, in which organisms are represented as points in some metric space and adaptation as a motion in this space towards some target point (an optimal organism), and fitness is negative distance to target. Minimisation of distance to the target is therefore equivalent to maximisation of fitness. Geometry of the metric space allows us to solve the optimisation problem precisely.  These abstract results will be used in the following sections to develop the theory further bringing it closer to biology.

\subsection{Representation and assumptions}
\label{sec:representation}

Let $\Omega$ be the set of all possible genotypes representing organisms.  This set is usually equipped with a metric $d:\Omega\times\Omega\to[0,\infty)$ related to the mutation operator, such that large mutations correspond to large distance $d(a,b)$ and vice versa.  For example, the set of all DNA sequences of length $l\in\bN$ can be represented by vectors in the Hamming space $\cH_\alpha^l:=\{1,\ldots,\alpha\}^l$ equipped with the Hamming metric $d_H(a,b)=\sum_{i=1}^l\delta_{a_i}(b_i)$ counting the number of different letters in two strings.  This choice of metric is particularly suitable for a simple point-mutation, which will be the focus of this paper. A sphere $S(a,r)$ and a closed ball $B[a,r]$ of radius $r\in[0,\infty)$ around $a\in\Omega$ are defined as usual:
\[
S(a,r):=\{b\in\Omega:d(a,b)=r\}\,,\quad
B[a,r]:=\{b\in\Omega:d(a,b)\leq r\}
\]
We refer to $r=d(a,b)$ as the \emph{mutation radius}.

Environment defines a preference relation $\lesssim$ (a total pre-order) so that $a\lesssim b$ means genotype $b$ represents an organism that is better adapted to or has a higher replication rate in a given environment than an organism represented by genotype $a$.  We shall consider only countable or even finite $\Omega$, so that there always exists a real function $f:\Omega\rightarrow\bR$ such that
\[
  a\lesssim b\qquad\iff\qquad f(a)\leq f(b)
\]
In game theory, such a function is called utility, but in the biological context it is called \emph{fitness}, and usually it is assumed to have non-negative values representing replication rates of the organisms.  The non-negativity assumption is not essential, however, because the preference relation $\lesssim$ induced by $f$ does not change under a strictly increasing transformation of $f$.  Thus, our interpretation of fitness simply as a numerical representation of a preference relation is distinct from population genetic definitions of fitness \citep[e.g. see][]{Orr09}.  We shall assume also that there exists a top (optimal) genotype $\top\in\Omega$ such that $f(\top)=\sup f(\omega)$, which represents the most adapted or quickly replicating organism.  Note that a finite set $\Omega$ always contains at least one top $\top$ as well as at least one bottom element $\bot$.

Generally, one should consider also the set of all environments (including other organisms), because different environments impose different preference relations on $\Omega$, which have to be represented by different fitness functions.  In this paper, however, we shall assume that fitness in any particular environment has been fixed.

During replication, genotype $a$ can mutate into $b$ with transition probability $P(b\mid a)$.  Mutation can have different effects on fitness: It can be deleterious, if $f(a)>f(b)$; neutral, if $f(a)=f(b)$; or beneficial, if $f(a)<f(b)$.




\begin{figure}[!ht]
\begin{center}
\setlength{\unitlength}{1mm}
\begin{picture}(50,40)
\linethickness{.075mm}
\put(34,17){\circle*{1}}
\put(34,15){$a$}
\put(34,17){\vector(0,1){7}}
\put(34,24){\circle*{1}}
\put(35,24.5){$b$}
\put(34.5,20){$r$}
\put(34,17){\circle{14}}
\put(5,35){\circle*{1}}
\put(2,35){$\top$}
\put(5,35){\vector(3,-2){28}}
\put(18,22){$n$}
\qbezier(5,5)(40,5)(40,40)
\put(5,35){\vector(4,-1){30}}
\put(22,32){$m$}
\qbezier(5,8)(37,8)(37,40)
\end{picture}
\end{center}
\caption{Mutation of point $a$ in a metric space into $b$ with mutation radius $r=d(a,b)$.  The distances $n=d(\top,a)$ and $m=d(\top,b)$ from an optimal element $\top$ define the fitnesses of $a$ and $b$.  }

\label{fig:mutation}
\end{figure}

In this section, we consider a simple picture $f(\omega)=-d(\top,\omega)$, so that maximization of fitness $f(\omega)$ is equivalent to minimization of distance $d(\top,\omega)$, and adaptation (beneficial mutation) corresponds to a transition from a sphere of radius  $n=d(\top,a)$ into a sphere of a smaller radius $m=d(\top,b)$, which is depicted in Figure~\ref{fig:mutation}.  This geometric view of mutation and adaptation is based on Ronald Fisher's idea \citep{Fisher30}, which was, perhaps, the earliest mathematical work on the role of mutation in adaptation.   Fisher represented individual organisms by points of Euclidean space $\bR^l$ of $l\in\bN$ traits, and equipped with the Euclidean metric $d_E(a,b)=(\sum_{i=1}^l |b_i-a_i|^2)^{1/2}$.  The top element $\top$ was identified with the origin in $\bR^l$, and fitness $f(\omega)$ with the negative distance $-d_E(\top,\omega)$.  Then Fisher used the geometry of the Euclidean space to show that the probability of beneficial mutation decreases exponentially as the mutation radius increases, and therefore mutations of small radii are more likely to be beneficial.   Despite subsequent development of the theory \citep{Orr05}, the use of Euclidean space for representation was not revised.

Euclidean space is unbounded (and therefore non-compact) and the interior of any ball has always smaller volume than its exterior.  Therefore, assuming mutation in random directions, a point on the surface of a ball around an optimum is always more likely to mutate into the exterior than the interior of this ball.  This simple property is key for Fisher's conclusion that adaptation is more likely to occur by small mutations.  We showed previously, however, that the geometry of a finite space, such as the Hamming space of strings, implies a different relation between the radius of mutation and adaptation \citep{Belavkin_etal11:_ecal11,Belavkin11:_itw11}.  In particular, the mutation radius maximising the probability of adaptation varies as a function of the distance to the optimum.


\subsection{Probability of adaptation in a Hamming space}




Consider mutation of genotype $a\in S(\top,n)$ in a Hamming space $\cH_\alpha^l$ into $b\in S(\top,m)$ with mutation radius $r=d(a,b)$, as shown on Figure~\ref{fig:mutation}.  Assuming equal probabilities for all points in the sphere $S(a,r)$, the probability that the offspring is in the sphere $S(\top,m)$ is given by the number of points in the intersection of spheres $S(\top,m)$ and $S(a,r)$:
\begin{equation}
P(m\mid n,r)=\frac{|S(\top,m)\cap S(a,r)|_{d(\top,a)=n}}{|S(a,r)|}
\label{eq:p-intersection}
\end{equation}
where $|\cdot|$ denotes cardinality of a set (the number of its elements).  The cardinality of the intersection $S(\top,m)\cap S(a,r)$ with condition $d(\top,a)=n$ is computed as follows:
\begin{eqnarray}
\lefteqn{\left|S(\top,m)\cap S(a,r)\right|_{d(\top,a)=n}}\nonumber\\
&=&\sum_{\substack{r_0+r_-+r_+=\min\{r,m\}\\r_+-r_-=n-\max\{r,m\}}} (\alpha-2)^{r_0}{n-r_+\choose r_0}(\alpha-1)^{r_-}{l-n\choose r_-}{n\choose r_+}\label{eq:h-intersection}
\end{eqnarray}
where summation runs over indexes $r_+\in[0,\lfloor (n-|r-m|)/2\rfloor]$ and $r_-\in[0,\lfloor (r+m-n)/2\rfloor]$ (here $\lfloor\cdot\rfloor$ denotes the floor operation) and satisfying conditions $r_0+r_-+r_+=\min\{r,m\}$ and $r_+-r_-=n-\max\{r,m\}$.  See Appendix~\ref{sec:spheres} for the derivation of this combinatorial result.  We point out that for $r\leq m$, the indexes $r_+$, $r_-$ and $r_0$ count respectively the numbers of beneficial, deleterious and neutral substitutions in $r\in[0,l]$.

The cardinality of sphere $S(a,r)\subset\cH_\alpha^l$ is
\begin{equation}
|S(a,r)|=(\alpha-1)^r{l\choose r}
\label{eq:h-sphere}
\end{equation}
Equations~(\ref{eq:p-intersection})--(\ref{eq:h-sphere}) allow us to compute the probability of adaptation, which is the probability that the offspring is in the interior of ball $B[\top,n]$:
\begin{equation}
P(m<n\mid n,r)=\sum_{m=0}^{n-1}P(m\mid n,r)
\label{eq:p-adaptation}
\end{equation}

\begin{figure}[!t]
\centering
\input{fisher4-k-bw}
\caption{Probability of adaptation $P(m<n\mid n,r)$ in the Hamming
  space ${\cal H}_4^{100}$ as a function of mutation radius $r$.
  Different curves show $P(m<n\mid n,r)$ for different distances
  $n=d_H(\top,a)$ of the parent string from the optimum $\top$.}
\label{fig:fisher-r}
\end{figure}

Figure~\ref{fig:fisher-r} shows the probability of adaptation for Hamming space $\cH_4^{100}$ as a function of mutation radius $r$ for different values of $n=d(\top,a)$.  One can see that when $n<75$ (more generally when $n<l(1-1/\alpha)$), the probabilities of adaptation decrease with increasing radius $r>0$, similar to Fisher's conclusion for the Euclidean space.  However, for $n=75$ there is no such decrease, and when $n>75$ (i.e. for $n>l(1-1/\alpha)$), the probability of adaptation actually increases with $r$.  This is due to the fact that, unlike Euclidean space, Hamming space is finite, and the interior of ball $B[\top,n]$ can be larger than its exterior.  The geometry of a Hamming space has a number of interesting properties \citep{Ahlswede-Katona77}.  For example, every point $\omega$ has $(\alpha-1)^l$ diametric opposite points $\neg\omega$, such that $d_H(\omega,\neg\omega)=l$, and the complement of a ball $B[\omega,r]$ in $\cH_\alpha^l$ is the union of $(\alpha-1)^l$ balls $B[\neg\omega,l-r-1]$.

\subsection{Random mutation}
\label{sec:random-mutation}

By mutation we understand a random process of transforming the parent string $a$ into offspring $b$, so that the mutation radius is a random variable.  The simplest form of mutation, called \emph{point mutation}, is the random process of independently substituting each letter in the parent string $a\in\{1,\ldots,\alpha\}^l$ to any of the other $\alpha-1$ letters with probability $\mu$.  At its simplest, with one parameter, there is an equal probability  $\mu/(\alpha-1)$ of mutating to each of the $\alpha-1$ letters.  The parameter $\mu$ is called the \emph{mutation rate}.  For point mutation, the probability of mutating by radius $r\in[0,l]$ is given by the binomial distribution:
\begin{equation}
P_\mu(r\mid n)={l\choose r}\,\mu^r(n)(1-\mu(n))^{l-r}
\label{eq:p-radius}
\end{equation}
We assume that the mutation rate $\mu$ may depend on the distance $n=d(\top,a)$ from the top string $n$, and therefore the probability is also conditional on $n$.

Optimisation of the mutation rate requires knowledge of the probability $P_\mu(m\mid n)$ that mutation of $a$ into $b$ leads to a transition from sphere $S(\top,n)$ into $S(\top,m)$.  This transition probability can be expressed as follows:
\[
P_\mu(m\mid n)=\sum_{r=0}^l P(m\mid n,r)\,P_\mu(r\mid n)
\]
Substituting~(\ref{eq:p-intersection}) and (\ref{eq:p-radius}) into the above equation, and taking into account~(\ref{eq:h-sphere}), we obtain the following expression:
\begin{equation}
P_\mu(m\mid n)=\sum_{r=0}^l\frac{\left|S(\top,m)\cap S(a,r)\right|_{d(\top,a)=n}}{(\alpha-1)^r}\,\mu^r(n)(1-\mu(n))^{l-r}
\label{eq:p-transition}
\end{equation}
where the number $|S(\top,m)\cap S(a,r)|_{d(\top,a)=n}$ is given by equation~(\ref{eq:h-intersection}).  The case $\alpha=2$ was investigated previously by several authors \citep[e.g.][]{Back93,Braga-Aleksander94}.  The expressions for arbitrary alphabets were first presented in \citep{Belavkin_etal11:_ecal11} \citep[see also][]{Belavkin11:_itw11}.

We note that simple, one parameter point mutation is optimal in a certain sense: it is the solution of a variational problem of minimisation of expected distance between points $a$ and $b$ in a Hamming space subject to a constraint on mutual information between $a$ and $b$ \citep[see][]{Belavkin11:_itw11,Belavkin11:_qbic11}.  The constraint on mutual information between strings $a$ and $b$ represents the fact that perfect copying is not possible.  The optimal solutions to this problem are conditional probabilities having exponential form $P_\beta(b\mid a)\propto\exp[-\beta\,d(a,b)]$, where parameter $\beta>0$, called the \emph{inverse temperature}, is related to the mutation rate, and it is defined from the constraint on mutual information.  The reason why this exponential solution in the Hamming space corresponds to independent substitutions with the same probability $\mu/(\alpha-1)$ is because Hamming metric is computed as the sum $d_H(a,b)=\sum_{i=1}^l\delta_{a_i}(b_i)$ of elementary distances $\delta_{a_i}(b_i)$ between letters $a_i$ and $b_i$ in $i$th position in the string, and the values $\delta_{a_i}(b_i)$ are equal to zero or one independent of the specific letters of the alphabet or their position $i$.  Other, more complex mutation operators, which incorporate multiple parameters or non-independent substitutions (the phenomenon known in biology as \emph{epistasis}) can be considered as optimal solutions of the same variational problem, but applied to a different representation space $\cH$ with a different metric.

\begin{figure}[t]
\centering
\input{mutation-rates-10-4-4-bw}
\caption{Different optimal mutation rate control functions derived mathematically to optimise different criteria in Hamming space $\cH_4^{10}$: Step function minimising expected distance to optimum in one generation, linear function maximising probability of mutating directly into optimum, a function maximising conditional probability $P(m<n\mid n)$ that an offspring is closer to optimum than its parent, cumulative distribution function $P_0(m<n)=\sum_{m=0}^{n-1}{l \choose m}\frac{(\alpha-1)^m}{\alpha^l}$ minimising expected distance to optimum subject to an information constraint \citep{Belavkin11:_dyninf}.}
\label{fig:mutation-rate-functions}
\end{figure}

\subsection{Optimal control of mutation rates}
\label{sec:theory-optimal-control}

The fact that the transition probability $P_\mu(m\mid n)$, defined by equation~(\ref{eq:p-transition}), depends on the mutation rate $\mu$ introduces the possibility of organisms maximising the expected fitness of their offspring by controlling the mutation rate.  We call the collection of pairs $(n,\mu)$ the mutation rate \emph{control function} $\mu(n)$.  Indeed, let $P_t(a)$ be the distribution of parent genotypes in $\cH_\alpha^l$ at time $t$, and let $P_t(n)=\sum_{a:d(\top,a)=n}P_t(a)$ be the distribution of their distances $n=d_H(\top,a)$ from the optimum.  Transition probabilities $P_\mu(m\mid n)$ define a linear transformation  $T_{\mu(n)}(\cdot):=\sum_{n=0}^lP_\mu(m\mid n)(\cdot)$ of distribution $P_t(n)$ into distribution $P_{t+1}(m)$ of distances $m=d_H(\top,b)$ of their offspring at time $t+1$:
\[
P_{t+1}(m)=T_{\mu(n)}P_t(n)=\sum_{n=0}^lP_\mu(m\mid n)P_t(n)
\]
If this transformation does not change with time, then the distribution $P_{t+s}(m)$ after $s$ generations is defined by $T_{\mu(n)}^s$, the $s$th power of $T_{\mu(n)}$.  The optimal mutation rates can be found (at least in principle) by minimising the expected distance subject to additional constraints, such as the time horizon $\lambda$:
\begin{equation}
  \min_{\mu(n)}\ \bE_{P_{t+s}}\{m\}=\sum_{m=0}^lm\,\left(T_{\mu(n)}^sP_t(n)\right)\quad\mbox{subject to $s\leq\lambda$}
  \label{eq:var-problem}
\end{equation}
For example, mutation rates minimising the expected distance at $\lambda=1$ generation should depend on $n$ according to the following step function:
\begin{equation}
\mu(n)=\left\{\begin{array}{ll}
0&\mbox{ if $n<l(1-1/\alpha)$}\\
1-1/\alpha&\mbox{ if $n=l(1-1/\alpha)$}\\
1&\mbox{ if $n>l(1-1/\alpha)$}
\end{array}\right.
\label{eq:step}
\end{equation}
This function is shown on Figure~\ref{fig:mutation-rate-functions} for Hamming space $\cH_4^{10}$.  The sudden change of the optimal mutation rate from $\mu=0$ at $n<l(1-1/\alpha)$ to $\mu=1$ at $n>l(1-1/\alpha)$ corresponds to the sudden change of the effect of the mutation radius on the probability of adaptation shown on Figure~\ref{fig:fisher-r}.  Note that this mutation control function is not optimal for minimisation of the expected distance up to $\lambda>1$ generations, because strings that are closer to the optimum than $l(1-1/\alpha)$ do not mutate, so that there is no chance of improvement.

%
 
The variational problem for the optimal control of the mutation rate, such as problem~(\ref{eq:var-problem}), can be formulated in different ways optimising different criteria (e.g. instantaneous or cumulative expected distance, probability of adaptation, probability of mutating directly into the optimum) or taking into account additional constraints (e.g. the time horizon, information constraints), and generally they lead to different solutions.  Previously, we investigated various types of such problems and obtained their solutions \citep{Belavkin_etal11:_ecal11,Belavkin11:_itw11,Belavkin11:_dyninf}, some of which are shown on Figure~\ref{fig:mutation-rate-functions}.  One can see that there is no single optimal mutation rate control function.  However, it is also evident that all these control functions have a common property of monotonically increasing mutation rate with increasing distance from the optimum.  The main question that we are interested in this paper is whether such monotonic control of mutation rate is beneficial in a broader class of landscapes, when fitness is not equivalent to distance. In Section~\ref{sec:monotonic}, we shall further develop the theory from the simple case considered in this section to more general fitness landscapes and formulate hypotheses which will be tested in biological landscapes in Section~\ref{sec:TF-landscapes}.  To generate data for this testing, we develop an evolutionary technique in Section~\ref{sec:meta-ga} to obtain approximations to the optimal control functions in a broad class of problems, when the derivation of exact solutions is impractical or impossible.


\section{Evolutionary optimisation of mutation rate control functions}
\label{sec:meta-ga}

Analytical approaches cannot always be applied to derive optimal mutation rate control functions due to high problem complexity.  Moreover, when fitness is not equivalent to negative distance, the transition probabilities between fitness levels may be unknown, so that analytical solutions are impossible.  Another approach is to use numerical optimisation to obtain approximately optimal solutions.  In this section, we describe an evolutionary technique that uses two genetic algorithms.  The first, which we refer to as the Inner-GA, evolves individual string with the mutation rate controlled by some function $\mu(y)$ that maps fitness value $y=f(\omega)$ of a string to its mutation rate $\mu\in[0,1]$.  The second, which we refer to as the Meta-GA, evolves a population $\{\mu_1(y),\ldots,\mu_n(y)\}$ of such mutation rate control functions for better performance of the Inner-GAs.  Note that the Inner-GA can use any fitness function.  In this section, we shall apply the technique to the case when fitness is equivalent to negative distance from an optimum (a selected point in a Hamming space).  The purpose of this exercise is to demonstrate that the functions $\mu(y)$ evolved by the Meta-GA have monotonic properties, similar to those possessed by the optimal mutation rate control function obtained analytically.  Later we shall apply the technique to more general fitness landscapes.

\subsection{Inner-GA}
 
The Inner-GA is a simple generational genetic algorithm, where each genotype is a string in Hamming space $\cH_\alpha^l$, and the optimal string is defined by a fitness function $y=f(\omega)$.  The initial population of 100 individuals had equal numbers of individuals at each fitness value, and they were evolved by the Inner-GA for 500 generations using simple point mutation.  The mutation rates were controlled according to function $\mu(y)$, specified by the Meta-GA, with fitness values as the input.  In the experiments described, we used no selection and no recombination in order to isolate the effect on evolution of the mutation rate control from other evolutionary operators.

Note that the parameters of the Inner-GA (e.g. population size, the number of generations) were chosen empirically to satisfy two conflicting objectives:  On one hand, the parameters should be large enough to get any sort of convergence at the Meta-GA level; on the other hand, the parameters should be small enough for the system to obtain satisfactory results in feasible time (in our case several months of run-time using a cluster of 72 GPUs).

\subsection{Meta-GA}

The Meta-GA is a simple generational genetic algorithm that uses tournament selection, which is known to be robust for fitness scores on arbitrary scales and shifts, and because of its suitability for highly parallel implementation.  Each genotype in the Meta-GA is a mutation rate function $\mu(y)$ of fitness values $y$.  The domain of $\mu(y)$ is an ordered partition of the range $\{y:f(\omega)=y,\ \omega\in\cH_\alpha^l\}$ of the Inner-GA fitness function.  Thus, individuals in the Meta-GA are strings of real values $\mu\in[0,1]$ representing probabilities of mutation at different fitnesses, as used in the Inner-GA.

\begin{figure}[t]
\begin{center}
\input{mutation-rates-evolved-10-4-bw}
\caption{Means and standard deviations of mutation rates evolved to minimise expected distance to the optimum in Hamming space $\cH_4^{10}$ after 500 generations.  The results are based on 20 runs of the Meta-GA, each evolving mutation rates for $5\cdot 10^5$ generations.  Each generation of the Meta-GA involved running the Inner-GAs for 500 generations with 100 individuals.  Dashed lines represent theoretical functions optimising short-term (step) and long-term (linear) criteria.}
\label{fig:mutation-10-4-d}
\end{center}
\end{figure}

At each generation of the Meta-GA, multiple copies of the Inner-GA were evolved for 500 generations, with the mutation rate in each copy controlled by a different function $\mu(y)$ taken from the Meta-GA population.  We used populations of 100 individual functions, which were initialised to $\mu(y)=0$.  All runs within the same Meta-GA generation were seeded with the same initial population of the Inner-GA.  The Meta-GA evolved functions $\mu(y)$ for $5\cdot10^5$ generations to maximise the average fitness $\bar y(t)\approx\bE\{y\}(t)$ in the final generation of the Inner-GA.

The Meta-GA used the following selection, recombination and mutation methods:
\begin{itemize}
\item Randomly select (without replacement) three individuals from the population and replace the least fit of these with a mutated crossover of the other two; repeat with the remaining individuals until all individuals from the population have been selected or fewer than three remain.
\item Crossover recombines the start of the numerical string representing one mutation rate function with the end of another using a single cut point chosen randomly, excluding the possibility of being at either end, so that there are no clones.
\item Mutation adds a uniform-random number $\Delta\mu\in[-.1,.1]$ to one randomly selected value $\mu$ (mutation rate) on the individual mutation rate function, but then bounds that value to be within $[0,1]$.
\end{itemize}

The Meta-GA returns the fittest mutation rate function $\mu(y)$.  In this study, the parameters in the Meta-GA were not optimised, as this would probably take more computational time than conducting the study itself.  However, given that Meta-GA converged to the same result, the only difference the parameters could make were how quickly the result was found.

\subsection{Evolved control functions}

The kind of mutation rate control function the Meta-GA evolves depends greatly on properties of the fitness landscape used in the Inner-GA.  In Section~\ref{sec:theory-optimal-control} we showed theoretically that for $f(\omega)$ corresponding to negative distance to optimum $-d_H(\top,\omega)$, the optimal mutation rate increases with $n=d_H(\top,\omega)$.  Therefore, the population of mutation rate functions in the Meta-GA should evolve the same characteristics in such a landscape.  Figures~\ref{fig:mutation-10-4-d} shows the average and standard deviations of the fittest control functions evolved in 20 runs of the Meta-GA using Inner-GAs with strings in $\cH_4^{10}$  (i.e. $\alpha=4$, $l=10$) and fitness defined by $f(\omega)=-d_H(\top,\omega)$.  As predicted, the mutation rate increases with $n=d_H(\top,\omega)$.  We shall now consider more complex landscapes.

\section{Weakly monotonic fitness landscapes}
\label{sec:monotonic}

The derivation of variable (or adaptive) optimal mutation rates described in Section~\ref{sec:theory-optimal-control} was based on the assumption that fitness $f(\omega)$ is equivalent to negative distance $-d(\top,\omega)$ from the top genotype.  Biological fitness landscapes, however, can be rugged \citep{Lobkovsky11}, meaning that fitness may have very little relation to distance in the space of genotypes.  In this section, we consider a more general relation between fitness and distance to study the effect of variable mutation rates on adaptation in more biologically realistic landscapes.  We begin by considering fitness as a noisy or partially observed distance, and then discuss monotonic relation between these ordered random variables.  We introduce several notions of monotonicity and then prove a theorem on weak monotonicity in a general class of fitness landscapes.

\subsection{Fitness-distance communication}

If fitness $y=f(\omega)$ is not isomorphic with distance $n=d(\top,\omega)$, but there is some degree of dependency between the two variables, then one could try to estimate unobserved distance from observed values of fitness and employ the control function $\mu(n)$ of mutation rate based on the estimated distance.  Such a control becomes \emph{$\varepsilon$-optimal}, where $\varepsilon$ represents some deviation from optimality.  The estimation of distance could be done sequentially using, for example, the filtering theory \citep{Stratonovich59:_nonlinear}.  Here, however, we shall limit our discussion to a simple case of using just the current fitness value $y_t$ instead of current distance $n_t$ to control the mutation rate.

Given a distribution $P(\omega)$ of strings in $\Omega$ (e.g. a uniform distribution $P(\omega)=\alpha^{-l}$ on a Hamming space $\cH_\alpha^l$), the fitness $y=f(\omega)$ and distance $n=d(\top,\omega)$ is a pair of random variables with joint distribution $P(y,n)$.  Note that if $\Omega$ has multiple optima $\top$, then $n$ should be understood as the distance from the nearest optimum: $n=\inf\{d(\top,\omega):\top\in\Omega\}$.  Joint distribution $P(y,n)$ defines conditional probabilities $P(n\mid y)$ and $P(y\mid n)$ by the Bayes formula.  Mutation of string $a$ into string $b$ results in the change of distance from $n_t=d(\top,a)$ to $n_{t+1}=d(\top,b)$ and the change of fitness from $y_t=f(a)$ to $y_{t+1}=f(b)$.  If fitness does not communicate more information about the distance than distance itself (i.e. fitness is a `noisy' distance), then one can show that fitness and distance are conditionally independent: $P(y_{t+1},y_t\mid n_{t+1},n_t)=P(y_{t+1}\mid n_{t+1})\,P(y_t\mid n_t)$ (see Remark~\ref{rem:cond-independence} in Appendix~\ref{sec:memoryless}).  In this case, the transition probability $P_\mu(y_{t+1}\mid y_t)$ between fitness values is expressed using the following composition of transition probabilities $P(n_t\mid y_t)$, $P_\mu(n_{t+1}\mid n_t)$ and $P(y_{t+1}\mid n_{t+1})$:
\[
P_\mu(y_{t+1}\mid y_t)=\sum_{n_{t+1}=0}^l\sum_{n_t=0}^l P(y_{t+1}\mid n_{t+1})P_\mu(n_{t+1}\mid n_t)P(n_t\mid y_t)
\]
(see Appendix~\ref{sec:memoryless} for details).  The transition probability $P_\mu(n_{t+1}\mid n_t)$ is defined by the geometry of the mutation operator in the space of genotypes $\Omega$, and for simple point mutation in a Hamming space it is given by equation~(\ref{eq:p-transition}).  Conditional probabilities  $P(n_t\mid y_t)$ and $P(y_{t+1}\mid n_{t+1})$ are defined by the fitness landscape, and they represent dependecy between fitness and distance.

The simplest and, perhaps, the most important such relationship is linear dependency, represented by correlation. The fitness-distance correlation has been used previously to describe problem difficulty for evolutionary algorithms \citep{Jones-Forrest95:_fdc,Jansen2001} and neutral mutations \citep{Poli-Galvan12}.  The fitness-distance correlation reflects global monotonic dependency between the pair of ordered random variables.  In biological context, however, such a global measure of monotonicity may be less important, because biological organisms tend to populate some neighbourhoods of local optima of fitness landscapes due to selection.  Thus, we define the concepts of local and weak monotonicity relative to a chosen metric.  We shall also prove that all landscapes that are continuous and open at local optima are weakly monotonic.  This result will allow us to formulate three hypotheses about control of mutation rates in biological landscapes, which we shall test experimentally in Seciton~\ref{sec:TF-landscapes}.

\subsection{Monotonicity of fitness and distance}

We first consider monotonic relationships between values of fitness function $f:\Omega\to\bR$ at points $a$, $b\in\Omega$ and their distances to an arbitrary point $\omega$ given by a metric $d:\Omega\times\Omega\to\bR_+$ (e.g. a Hamming metric in $\cH_\alpha^l$).  If all $a$ and $b$ inside some ball $B[\omega,n]$, $n>0$, satisfy the properties below, we say that:
\begin{itemize}
\item $f$ is \emph{locally monotonic} relative to metric $d$ at $\omega$ if:
\[
-d(\omega,a)\leq -d(\omega,b)\quad\Longrightarrow\quad f(a)\leq f(b)
\]
\item $d$ is \emph{locally monotonic} relative to $f$ at $\omega$ if:
\[
-d(\omega,a)\leq -d(\omega,b)\quad\Longleftarrow\quad f(a)\leq f(b)
\]
\item $f$ and $d$ are \emph{locally isomorphic} at $\omega$ if both implications hold.
\item We say that $d$ or $f$ are \emph{globally monotonic} (\emph{isomorphic}) at $\top$ relative to each other if the relevant property holds over $B[\omega,l]\equiv\Omega$.
\end{itemize}

\begin{figure}[!ht]
\begin{center}
\setlength{\unitlength}{2mm}
a)
\begin{picture}(12,6)
\linethickness{.2mm}
\put(0,0){\line(2,3){2}}
\put(12,0){\line(-2,3){2}}
\put(2,3){\line(1,0){2}}
\put(10,3){\line(-1,0){2}}
\put(4,3){\line(2,3){2}}
\put(8,3){\line(-2,3){2}}
\end{picture}
b)
\begin{picture}(12,6)
\linethickness{.2mm}
\put(0,0){\line(3,2){3}}
\put(12,0){\line(-3,2){3}}
\put(3,2){\line(0,1){2}}
\put(9,2){\line(0,1){2}}
\put(3,4){\line(3,2){3}}
\put(9,4){\line(-3,2){3}}
\end{picture}
c)
\begin{picture}(12,6)
\linethickness{.2mm}
\put(0,0){\line(1,1){6}}
\put(12,0){\line(-1,1){6}}
\end{picture}
\end{center}
\caption{Schematic representation of monotonic properties described.  Abscissae represent string space, ordinates represent fitness. a) Fitness is monotonic relative to distance to optimum (fitness landscape can have `plateaus'); b) distance to optimum is monotonic relative to fitness (landscape can have `cliffs'); c) fitness and distance to optimum are isomorphic (neither cliffs nor plateaus are allowed).}
\label{fig:monotonicity}
\end{figure}

The three monotonic relations between fitness and distance defined above are illustrated on Figure~\ref{fig:monotonicity}.  These cases represent idealised situations, because usually the value of distance does not define the value of fitness uniquely and vice versa.  Indeed, the pre-image of distance $d(\top,\omega)=n$ is a sphere $S(\top,n):=\{\omega:d(\top,\omega)=n\}$, and there can be strings with different fitness values within the sphere.  Similarly, the pre-image of fitness $f(\omega)=y$ is the set $f^{-1}(y)=\{\omega:f(\omega)=y\}$, and strings within this set may have different distances from $\top$.  Thus, to describe monotonicity in realistic landscapes, one can modify the definitions by considering the `average' (i.e. expected) fitness or distance within the sets.  In particular, we shall denote the average fitness at distance $d(\top,\omega)=n$ and the average distance at fitness $f(\omega)=y$ respectively as follows:
\[
\bE[f(\omega_n)]:=\frac{1}{|S(\top,n)|}\sum_{\omega\in S(\top,n)}f(\omega)\,,\ 
\bE[d(\top,\omega_y)]:=\frac{1}{|f^{-1}(y)|}\sum_{\omega\in f^{-1}(y)}d(\top,\omega)
\]
The above averages are particular cases of conditional expectations under the assumption of a uniform distribution $P(\omega)=\alpha^{-l}$ of strings in a Hamming space $\cH_\alpha^l$.  Appendix~\ref{sec:weak-mon} gives definitions for arbitrary Borel probability measure on a metric space.

In what follows, we shall use specific notation $\bE[f(a)]$, with letter `$a$' instead of $\omega_n$, to denote average fitness at distance $d(\top,\omega)=d(\top,a)$ (instead of $d(\top,\omega)=n$).  Similarly, we shall use notation $\bE[d(\top,a)]$, with a specific letter `$a$' instead of $\omega_y$, to denote average distance at fitness $f(\omega)=f(a)$ (instead of $f(\omega)=y$).  Such notation is convenient to define average (mean) monotonicity.  If all $a$ and $b$ inside some ball $B[\omega,n]$, $n>0$, satisfy the properties below, we say that:
\begin{itemize}
\item $f$ is on average \emph{locally monotonic} relative to metric $d$ at $\omega$ if:
\[
-d(\omega,a)\leq -d(\omega,b)\quad\Longrightarrow\quad \bE[f(a)]\leq \bE[f(b)]
\]
\item $d$ is on average \emph{locally monotonic} relative to $f$ at $\omega$ if:
\[
-\bE[d(\omega,a)]\leq -\bE[d(\omega,b)]\quad\Longleftarrow\quad f(a)\leq f(b)
\]
\item $f$ and $d$ are on average \emph{locally isomorphic} at $\omega$ if both implications hold.
\end{itemize}

If $\omega$ is a local optimum in a finite space (e.g. a Hamming space), then fitness and distance are always locally monotonic relative to each other (and hence isomorphic) at least inside the ball $B[\omega,1]$ of radius one (otherwise, $\omega$ cannot be a local optimum).  However, if the size $|\Omega|$ of the space is large, then the neighbourhood becomes negligible, and therefore the notions of local monotonicity become less important in larger landscapes.  For larger neighbourhoods one can speak only about the probability that the implications above are true for some pair of points $a$ and $b$.  Thus, for larger neighbourhoods we can define monotonicity \emph{in probability} (or in measure).  However, because monotonicity always holds with some (possibly zero) probability, and it holds trivially with probability one at each point (i.e. in a zero-radius ball $B[\omega,0]$), we should make such a definition more useful by distinguishing, for example, landscapes, in which the probability of monotonicity gradually increases as points get closer to a local optimum.  We refer to this notion as \emph{weak monotonicity}.

Let $\{\omega_n\}_{n\in\bN}$ be a sequence of points such that the distances $d(\omega,\omega_n)$ converge to zero or fitness values $f(\omega_n)$ converge to $f(\omega)$.  If $a$, $b\in\{\omega_n\}_{n\in\bN}$ of any such sequence satisfy the properties below, we say that:
\begin{itemize}
\item $f$ is \emph{weakly monotonic} relative to metric $d$ at $\omega$ if:
\[
\lim_{d(\omega,b)\to0} P\Bigl\{-d(\omega,a)\leq -d(\omega,b)\quad\Longrightarrow\quad \bE[f(a)]\leq \bE[f(b)]\Bigr\}=1
\]
\item $d$ is \emph{weakly monotonic} relative to $f$ at $\omega$ if:
\[
\lim_{f(b)\to f(\omega)} P\Bigl\{-\bE[d(\omega,a)]\leq -\bE[d(\omega,b)]\quad\Longleftarrow\quad f(a)\leq f(b)\Bigr\}=1
\]
\item $f$ and $d$ are \emph{weakly isomorphic} at $\omega$ if both conditions hold.
\end{itemize}

Weak monotonicity is implied by the average local monotonicity in some ball $B[\omega,n]$ with $n>0$, because the latter means that the implications above hold with probability one in $B[\omega,n]$.  The average local monotonicity is in turn implied by the (strong) local monotonicity.  The relation between the three notions is shown by the implications below:
\[
\mbox{Local monotonicity}\ \Rightarrow\ 
\mbox{Average local monotonicity}\ \Rightarrow\ 
\mbox{Weak monotonicity}
\]
Moreover, one may consider an increasing sequence of finite landscapes such that in the limit $|\Omega|\to\infty$ the landscape is modelled by a continuum metric space.  In this case, fitness may fail to be monotonic at any point, including the global optimum, even if fitness is a continuous function.  Indeed, it is well known that almost all continuous functions are nowhere differentiable, and therefore they are also nowhere monotonic \citep{Banach1931,Mazurkiewicz1931}.  However, as will be shown by the theorem below, weaker monotonicity may still hold in such landscapes.

Like weak monotonicity, fitness-distance correlation can also be applied to infinite landscapes, inlcuding nowhere monotonic landscapes.  However, while fitness-distance correlation describes global property of a landscape, weak monotonicity effectively describes a gradual increase of fitness-distance correlation in decreasing neighbourhoods of a point.  Thus, although weak monotonicity is related to fitness-distance correlation, these notions are not equivalent.  In fact, unlike fitness-distance correlation, weak monotonicity holds in a very broad class of landscapes, including infinite landscapes.

\begin{theorem}
Let $(\Omega,d)$ be a metric space equipped with a Borel probability measure $P$, and let $f:\Omega\to\bR$ be $P$-measurable.  Let $\top$ be a local optimum: $f(\top)=\sup\{f(\omega):\omega\in E\subseteq \Omega\}$.  Then
\begin{description}
\item [($\Rightarrow$)] If $f$ is continuous at $\top$, then $f$ is weakly monotonic relative to $d$ at $\top$.
\item [($\Leftarrow$)] If $f$ maps open balls $B[\top,\delta)\subseteq E$ to open intervals $(f(\top)-\varepsilon,f(\top)]$, then $d$ is weakly monotonic relative to $f$ at $\top$.
\item [($\iff$)] If $f$ satisfies both conditions above, then $f$ and $d$ are weakly isomorphic at $\top$.
\end{description}
\label{th:weak-mon}
\end{theorem}

The proof of this theorem is given in Appendix~\ref{sec:weak-mon}, and it is based on the construction of a decreasing sequence $\{\delta_n\}_{n\in\bN}$ of radii $\delta_n>0$ around $\top$ for any increasing sequence $\{f(\top)-\varepsilon_n\}_{n\in\bN}$, which is guaranteed by continuity of $f$ at $\top$.  Note that we used metric in the theorem, because metric spaces are well-understood, but the theorem and its proof can be reformulated in terms of a quasi-pseudometric.  Every quasi-uniform space with countable base (and hence every corresponding topological space) is quasi-pseudometrisable \citep[e.g. see][Theorem~1.5]{Fletcher-Lindgren82}, which probably subsumes any topology on DNA or RNA structures \citep{Stadler_etal2001}.

Weak monotonicity implies increasing probability of positive correlation between fitness and negative distance to a local or global optimum in decreasing neighbourhoods.  This suggests that the fitness-based control $\mu(y_t)$ of mutation rate in any continuous and open landscape should resemble the distance-based control $\mu(n_t)$ in some neighbourhood of an optimum. This forms our first hypothesis:
\begin{hypothesis}
Optimal mutation rate increases with a decrease in fitness in some neighbourhood of an optimum for realistic fitness landscapes (e.g. biological landscapes), where fitness is not globally isomorphic to distance.
\label{h1}
\end{hypothesis}
Further, the more monotonic the landscape, the more the optimal mutation rate control function will resemble theoretical functions derived and discussed in Section~\ref{sec:geometry}; this forms our second hypothesis:
\begin{hypothesis}
The larger the neighbourhood of weak monotonicity, the more mutation rate control may contribute to evolution towards high fitness.
\label{h2}
\end{hypothesis}
We test these hypotheses in Section~\ref{sec:TF-landscapes}.

\subsection{On the role of genotype-phenotype mapping}

Mutation occurs at the microscopic level as a random change of a genotype, whereas fitness is defined by the interaction of an organism with its environment, and therefore is a property of the phenotype rather than genotype.  If we denote by $X$ the set of all phenotypes, then fitness of genotypes $f:\Omega\to\bR$ can be factorised into a composition $f=\varphi\circ\kappa$ of a genotype-phenotype mapping $\kappa:\Omega\to X$ and phenotypic fitness $\varphi:X\to\bR$.  We use a function $\kappa$ for genotype-phenotype mapping, because we assume for simplicity that one genotype cannot be decoded into two or more phenotypes.  On the other hand, there are usually many genotypes corresponding to the same phenotype \citep{Schuster_etal94}.  The genotype-phenotype mapping $\kappa$ can be seen as a black-box model of DNA decoding via translation and transcription.

The set $X$ of phenotypes is pre-ordered by the values of phenotypic fitness ($x\lesssim_X z$ iff $\varphi(x)\leq \varphi(z)$), while the set $\Omega$ of genotypes is pre-ordered by the values of distance from the nearest top genotype ($a\lesssim_\Omega b$ iff $-d(\top,a)\leq-d(\top,b)$).  It is clear from factorisation $f=\varphi\circ\kappa$ that the relation between fitness $f$ of genotypes and their distance from an optimum depends on monotonic properties of the genotype-phenotype mapping.  For example, genotypic fitness is order-isomorphic with distance when the genotype-phenotype mapping satisfies the condition: $a\lesssim_\Omega b$ if and only if $\kappa(a)\lesssim_X\kappa(b)$.

The factorisation $f=\varphi\circ\kappa$ shows that part of the fitness function, specifically $\kappa$, is property of an organism, and therefore a monotonic relation between fitness and distance can be an adaptive and evolving property.  This forms our third hypothesis:
\begin{hypothesis}
The extent to which mutation rate control may contribute to the evolution of high fitness is itself a trait, which will evolve across biological organisms.
\label{h3}
\end{hypothesis}
We analyse data that may support this hypothesis in Section~\ref{sec:TF-landscapes}.

\section{Evolving fitness-based mutation rate control functions}
\label{sec:TF-landscapes}



In this section, we conduct a computational experiment using landscapes with biological origins to test the hypotheses arising from our theory in Section~\ref{sec:monotonic}.  We used the earlier described Meta-GA technique (see Section~\ref{sec:meta-ga}) to evolve approximately optimal functions for 115 published complete landscapes of transcription factor binding \citep{Badis09}.  This also allows us to establish the range of fitness values over which monotonicity of optimal mutation rate holds, quantifying the extent to which Hypothesis~\ref{h1} holds for these biological landscapes.  TFs have evolved over very long periods to bind to specific DNA sequences.  The landscapes show experimentally measured strengths of interaction (DNA-TF binding score) between the double-stranded DNA sequences of length $l=8$ of base pairs each and a particular transcription factor.  Thus, we represent the set of all DNA sequences by Hamming space $\cH_4^8$ (i.e. $\alpha=4$, $l=8$), and consider the DNA-TF binding score as their fitness, which is clearly different from the negative Hamming distance from the top string (a sequence with the maximum DNA-TF binding score).


\subsection{Evolved control functions}

We used the Meta-GA evolutionary optimisation technique, described in Section~\ref{sec:meta-ga}, to obtain for each landscape an approximately optimal mutation rate control function maximising the average DNA-TF binding score in the population (expected fitness) after 500 replications.  Our experiments showed that 16 replicate runs\footnote{We used a multiple of 4 due to 4 GPUs used in one node.} were sufficient to achieve satisfactory convergence in feasible time for each of the 115 transcription factor landscapes.

\begin{figure}[t]
\input{landscape-types-mutation-bw}
\caption{Examples of GA-evolved optimal mutation rate control functions.  Data are shown for the transcription factors Srf, Glis2 and Zfp740.  Each curve represents the average of 16 independently evolved optimal mutation rate functions on a particular transcription factor DNA-binding landscape \citep{Badis09}.  Errorbars represent standard deviations from the mean.  Similar curves for all 115 landscapes are shown in supplementary Fig.~\ref{fig:all-curves}.  The arrows indicate the \emph{monotonicity radius} $\varepsilon$, that defines an interval of fitness values below the maximum, where mutation rate monotonically increases.}
\label{fig:three-curves}
\end{figure}

Figure~\ref{fig:three-curves} shows the average values and standard deviations of the evolved mutation rates for three transcription factors: Srf, Glis2 and Zfp740.  Evolved functions for all landscapes are shown on Figure~\ref{fig:all-curves} in supplementary material.  One can see that the evolved functions for each transcription factor landscape is approximately monotonic in the direction predicted: close to zero mutation at the maximum fitness, rising to high levels further from the maximum fitness value.  This supports Hypothesis~\ref{h1} as developed from the theory in Section~\ref{sec:monotonic}.

Small standard deviations indicate good convergence to a particular control function.  Observe that there is poor convergence at low fitness areas of the landscape that are poorly explored by the genetic algorithm.  Once the mutation rate has peaked near the maximum value $\mu=1$, the mutation rates tend to decrease and become chaotic.  As will be shown in the next section, this occurs at lower fitness values at which the landscape is no longer monotonic (i.e. further from the peak of fitness).

\subsection{Landscapes for transcription factors}

The variation in the evolved mutation rate control function is clearly related to a variation in the properties of the landscapes.  Our theoretical analysis suggests that the main property affecting mutation rate control is monotonicity of the landscape relative to a metric measuring the mutation radius.  In particular, the radius of point-mutation is measured by the Hamming metric, and we shall look into the local and weak monotonic properties of the transcription factors landscapes relative to the Hamming metric.

\begin{figure}[t]
\input{landscape-types-fitness-bw}
\caption{Examples of fitness landscapes based on the binding score between DNA sequences and transcription factors (TF) from \citep{Badis09}.  Data are shown for the transcription factors: Srf, Glis2 and Zfp740.  Lines connect mean values of the binding score shown as functions of the Hamming distance from the top string (a sequence with the highest DNA-TF binding score).  Errorbars represent standard deviations.  Similar curves for all 115 landscapes are shown in supplementary Fig.~\ref{fig:all-fitness-local}.
}
\label{fig:landscape-types-fitness}
\end{figure}

Figure~\ref{fig:landscape-types-fitness} shows average DNA-TF binding scores within spheres $S(\top,n)$ around the optimal string as a function of Hamming distance $n=d_H(\top,\omega)$ from the optimum.  Data is shown for three transcription factors: Srf, Glis2 and Zfp740.  Lines connect average values at discrete distances for visualisation purposes.  Errorbars show standard deviations of the DNA-TF binding scores within the spheres.  Distributions of fitness with respect to Hamming distance $d_H(\top,\omega)$ for all 115 transcription factors are shown on Figure~\ref{fig:all-fitness-local} (supplementary material).

One can see from Figure~\ref{fig:landscape-types-fitness} that the landscape for the Srf factor has monotonic properties: The average values increase steadily for strings that are closer to the optimum, and the deviations from the mean within the spheres are relatively small.  This is in contrast to the other two landscapes.  We note also that the average values for Glis2 decrease quite sharply around the optimum, while the landscape for Zfp740 has a relatively flat plateau area around the optimum, which means that there are many sequences with high DNA-TF binding score.  This difference may explain different gradients of optimal mutation rates near the maximum fitness shown on Figure~\ref{fig:three-curves}.

\begin{figure}[t]
\input{edge-tau-rnd-bw}
\caption{Linear relation between monotonicity of the landscapes measured by the Kendall's $\tau$ correlation (ordinates) and the monotonicity radius $\varepsilon$ (abscissae) of the corresponding evolved mutation rate control functions.  Three labels show data for three transcription factors shown in Figs.~\ref{fig:three-curves} and \ref{fig:landscape-types-fitness}.}
\label{fig:edge-tau}
\end{figure}

\subsection{Monotonicity and controllability}

Our results have confirmed that the evolved optimal mutation rates rise from zero to very high levels as fitness decreases from the maximum value $f(\top)$ to some value $f(\top)-\varepsilon$ (see Fig.~\ref{fig:three-curves} and supplementary Fig.~\ref{fig:all-curves}).  We refer to the corresponding value $\varepsilon>0$ as the \emph{monotonicity radius}, as it defines the neighbourhood of $\top$ in terms of fitness values in which the evolved mutation rate control function has monotonic properties.  We find substantial variation in monotonicity radius among transcription factors.

We hypothesised that the variation in the optimal mutation rate control functions relates to variation in the monotonicity of the transcription factor landscapes (Hypothesis~\ref{h2}).  Various measures have been proposed for the roughness of biological landscapes \citep{Lobkovsky11}.  Here we focus on Kendall's $\tau$ correlation, which is directly concerned with monotonicity; specifically, $\tau$ measures the proportion of mutations that, in moving closer to the optimum in string space, also increase in fitness.  As shown in Figure~\ref{fig:edge-tau}, we find that the value of $\tau$ of the landscape does indeed have a relationship with the monotonicity radius $\varepsilon$ of the evolved mutation rate control functions (Spearman's $\rho= 0.77$, $P \approx 10^{-16}$, $N=115$), supporting Hypothesis~\ref{h2}.

Finally, we investigated whether these related features of the TF landscapes and mutation rate functions themselves relate to the biological evolution of these TF systems.  To test this we looked at the evolutionary origins of the TF families, to which the 115 TFs tested above belonged, using an integer scale indicating key splits in the tree of eukaryotic life \citep{Weirauch11}. We find a significant relationship between this scale of biological evolution and the monotonicity radius $\varepsilon$ (Spearman's $\rho = 0.21$, $P = 0.021$, $N = 115$).  This indicates that TFs in families that originated more recently (e.g. in families restricted to Deuterostomes, rather than being present across all eukaryotic life) tend to have broader regions over which the optimal mutation rate monotonically increases with distance from the binding optimum.  This is consistent with Hypothesis~\ref{h3}, indicating that the extent to which mutation rate control may contribute to the evolution of high fitness itself evolves through the tree of life.


\section{Discussion}
\label{sec:discussion}

In this paper we have developed and tested theory relating to the control of the mutation rate in biological sequence landscapes. To do so, we had to move the theory closer to the biology in three ways. Firstly (in Section~\ref{sec:geometry}), we generalised Fisher's geometric model of adaptation, from its Euclidean space (continuous and infinite) to a discrete, finite Hamming space of strings.  Doing so demonstrated that, in contrast to the behaviour in Euclidean space, where the probability of beneficial mutation behaves similarly at different distances from the optimum  \citep{Orr03}, the probability of beneficial mutation, for a given mutation size, varies markedly depending on the distance from the optimum (Figure~\ref{fig:fisher-r}). Secondly, we analytically derived functions for optimal control of the mutation rate minimising the expected Hamming distance to a particular point (optimal string).  We also demonstrated a variation of these control functions dependent on specific formulations of the optimisation problem.  Nonetheless we observed consistency: all optimal functions increase monotonically (Figure~\ref{fig:mutation-rate-functions}). Thirdly, we developed theory concerning monotonic properties of fitness landscapes and establishing sufficient conditions of weak monotonicity.  The theory demonstrated that all biological landscapes over discrete spaces, however rugged, are characterised by monotonic properties in some neighbourhood of the optimum.  Therefore, optimal solutions to the geometric problem of optimal mutation rate control based on distance can be applied more broadly to problems of $\varepsilon$-optimal control of mutation rate based on fitness in biological systems.

Empirical biological fitness landscapes mapping genotypes to fitness values within a small, defined, subset of genotypic space are becoming increasingly available \citep{deVisser14}. Here we use the test case of the affinities of 115 different transcription factors for all possible eight base-pair DNA sequences \citep{Badis09}. We used these landscapes to test hypotheses arising from the theory, relating to the nature of optimal mutation rate functions (Hypothesis~\ref{h1} and Figs.~\ref{fig:three-curves}, \ref{fig:landscape-types-fitness} and \ref{fig:edge-tau}).  In each case we find evidence to support the hypothesis, consistent with the idea that our theory is not only correct, but, as expected, substantively relevant to such biological fitness landscapes.  


Given that we find this theory to be relevant to biological fitness landscapes, we need to ask how it might manifest itself within biology. There are several requirements if biological organisms are to exert any approximation to optimal mutation rate control. The first requirement is variation in mutation rate. There is evidence for abundant variation in biological mutation rates, both across species \citep{Sung12} and among populations of a species \citep{Bjedov03}. Variation is therefore possible.  However, for this theory to be relevant, that variation needs to be controllable by the organism.  This in turn requires that mutation rate varies right down to the level of an individual genotype, i.e. \emph{mutation rate plasticity} (MRP). There is evidence for MRP in `stress-induced mutagenesis' \citep{Galhardo07} and related phenomena, such as the increased number of mutations in sperm from older males \citep{Kong12}. However, while this constitutes MRP, the possibility of control requires that this plasticity is not merely the inevitable result of an organism's environment (e.g. the accumulation of damage with time or due to stress factors), but controllable by the organism in response to that environment. The proximate and ultimate causes of stress-induced mutagenesis are much debated, but that they include any form of `control' is far from clear \citep{Maclean13}. Clearer evidence of control is, however, present in a novel example of MRP we described recently \citep{eids_nature14}. In this case, there is environmentally dependent MRP that can be switched on or off by the presence or absence of a particular gene (\emph{luxS}). 

The next requirement for a biological analogue of the theory described here is that control of the mutation rate may be exercised as a decreasing function of fitness. This requires that an organism can somehow assay its own fitness. This is a non-trivial requirement in that fitness is a function of one or more generations of an organism's offspring, not of an organism itself. Various proxies are conceivable that might give an organism an indication of its fitness. These include counting its offspring relative to some internal or external clock, counting the population as a whole, or testing aspects of the environment that may correlate with the future likelihood of offspring. The last of these could include stressors, meaning that stress-induced mutagenesis might meet this requirement. In our recently identified example, the aspect of the environment with which mutation rate varies is the density of a bacterial culture. Population density can act as a good proxy for fitness in some circumstances (e.g. in a fixed volume bacterial culture), and the mutation rate does indeed decrease with increasing density \citep{eids_nature14}, consistent with the fitness-associated control of mutation rate we here determine to be optimal.

The final requirement for the existence of biological mutation-rate control of the sort addressed here is that it is possible for it to evolve and be maintained by the processes of biological evolution. This is not trivial in that it involves the evolution of plasticity, which is not as straight-forward or common in biology as might be expected \citep{Scheiner12}. It also involves so-called `second-order selection' \citep{Tenaillon01}.  This is because any particular mutation rate or MRP is unlikely to affect an individual's fitness (and therefore selection) directly; rather, MRP must be selected for indirectly via the genetic effects it produces. Nonetheless, phenotypic plasticity occurs widely and, while rare, there are clear examples of second-order selection occurring in biology \citep{Woods11}. Furthermore, here we demonstrate MRP rapidly evolving \emph{de novo} to particular forms (Figure~\ref{fig:three-curves}). The genetic algorithm (GA) in this case was not created to mimic biology, and the group-selection used by the outer GA in particular is rather un-biological. However, others, working with explicitly biological population genetic models, also find the evolution of MRP \citep{Ram12}. This implies that not only is the MRP predicted here possible for biological organisms, but it may reasonably be expected to evolve and be maintained. It remains to be tested whether the precise range and nature of the MRP identified by \citet{eids_nature14} does indeed fulfil this role i.e. to enable populations to evolve faster and/or further in realised, whole organism biological fitness landscapes in a similar fashion to the evolutionary advantage seen for \emph{in silico}, molecular interaction landscapes tested here (Figure~\ref{fig:three-curves}). Nonetheless, such density-dependent MRP \citep{Krasovec14b} is a prime candidate for a biological manifestation of the mutation rate control which we have addressed here.

We have focused on fitness-associated control of mutation rate. However, mutation is only one evolutionary process where fitness-associated control may be beneficial.  Recombination and dispersal are also evolutionary processes that may be under the control of the individual and therefore open to similar effects.  Fitness-associated recombination has been demonstrated to be advantageous theoretically \citep{Hadany03,Agrawal05} and identified in biology \citep{Agrawal08,Zhong11}.  Similarly, the idea that dispersal associated with low fitness might be advantageous has a basis in simulation of spatially differentiated populations  \citep{Aktipis2004,Aktipis2011}. This association might perhaps be framed more generally in terms of `fitness-associated dispersal'.  Thus, the framework for control of mutation rate in response to fitness that we have developed here may in future be applicable to both recombination and dispersal.

Overall, our development of theory and testing its predictions \emph{in silico} not only clarifies ideas around the monotonicity of fitness landscapes and mutation rate control, it leads directly to hypotheses about specific systems in living organisms. At the same time there is the potential for greater insight through further development of the theory.  Three directions seem particularly likely to be fruitful.

First, while it is striking how effective mutation rate control is at enabling adaptive evolution, without invoking selection in our  \emph{in silico} experiments, it will be important to consider the role of selection strategies.  Such strategies may implicitly modify fitness functions.  For instance, one of the analytically derived functions shown in Figure~\ref{fig:mutation-rate-functions} is the mutation rate function for a DNA space ($\cH_4^{10}$) which maximises the probability of adaptation (as derived by \citet{Back93} for binary strings).  As outlined in Section \ref{sec:theory-optimal-control}, maximising the probability of adaptation is equivalent to maximising expected fitness of the offspring relative to its parent. This effect may be implicit in a selection strategy that removes the offspring of reduced fitness that will inevitably be produced by maximising offspring expected fitness. Given the importance of selection in biology, we therefore anticipate that such functions may be closer to mutation rate control functions in living organisms.  This requires further work.

A second area for development is in variable adaptive landscapes.  The importance of time-varying adaptive landscapes in biological evolution is becoming increasingly appreciated \citep{Mustonon09,Collins11} and variable mutation rates have a particular role here \citep{Stich10}.  It is worth noticing, however, that our derivation of optimal mutation rate functions is \emph{not} dependent on a fixed landscape, as it depends only on the fitness values.  Nonetheless, as we demonstrate for the transcription factor landscapes, variation in landscapes' monotonic properties relates to the shape of mutation rate functions in predictable ways (Figure~\ref{fig:edge-tau}).  This deserves further exploration both theoretically and empirically: measuring variation in the monotonic properties of real biological landscapes will be informative about optimal mutation rate functions and \emph{vice versa}.

Finally, there is potential to develop theory around the role of the genotype-phenotype mapping.  Landscape monotonicity, as explored here, is not absolute; it may depend on this mapping.  That is, if the decoding of DNA changes, it may be possible to convert a non-monotonic landscape into a monotonic one.  Biology uses a variety of such decoding schemes which may themselves evolve.  For the transcription factor landscapes used here, the decoding scheme is defined by the biochemical interactions between the transcription factor (a protein molecule) and DNA.  Thus, evolution of transcription factors constitutes evolution of DNA-decoding, and indeed we do find a relationship between the evolutionary age of gene families and the monotonic properties of the associated landscapes.  A more familiar example is the genetic code, where there is much existing work on its evolution \citep[e.g.][]{Freeland00}.  Determining how evolution of such codes affects the monotonic properties of biological landscapes as explored here may, therefore, provide novel insights into large-scale evolutionary patterns.  Ultimately, theory such as this that identifies analytically or empirically optimal mutation rate control functions may help make predictions about evolutionary responses to future environmental change \citep{Chevin10} or inferences about the environment(s) within which particular organisms evolved.  In the meantime, mutation rate control as developed here may assist directed evolution within biological and other complex landscapes, for instance in the evolution of DNA-protein binding \citep{Knight09}.

\bibliographystyle{spbasic}      
\bibliography{rvb,nn,other,newbib,ica,evolution,eids-pointmutation}


\appendix

\section{Intersection of spheres in Hamming space $\cH_\alpha^l$}
\label{sec:spheres}

\begin{proposition}
Let $S(a,r)$ and $S(c,m)$ be two spheres in $\cH_\alpha^l:=\{1,\ldots,\alpha\}^l$ with Hamming distance between the centres $d_H(a,c)=n$.  Then the cardinality of their intersection is
\begin{eqnarray}
\lefteqn{\left|S(a,r)\cap S(c,m)\right|_{d(a,c)=n}}\label{eq:h-intersection-a}\\
&=&\sum_{\substack{%
r_+\in[0,\lfloor (n-|r-m|)/2\rfloor]\\
r_-\in[0,\lfloor (r+m-n)/2\rfloor]\\
r_0+r_-+r_+=\min\{r,m\}\\
r_+-r_-=n-\max\{r,m\}}} (\alpha-2)^{r_0}{n-r_+\choose r_0}(\alpha-1)^{r_-}{l-n\choose r_-}{n\choose r_+}\nonumber
\end{eqnarray}
\label{pr:spheres}
\end{proposition}

\begin{proof}
The intersection is formed by points $b\in S(a,r)\cap S(c,m)\subset\cH_\alpha^l$, which form triangles together with the centres of the spheres $a$ and $c$:
\[
\xymatrix{&b&&\\a\ar[ru]^r\ar@{-}[rrr]_n&&&c\ar[llu]_m}
\]
Point $b$ can be obtained equivalently by substituting $r$ letters in $a$ or by substituting $m$ letters in $c$, and we shall count using the smallest number of substitutions $\min\{r,m\}$.

Consider a substitution of a letter $a_i$ of string $a$ into letter $b_i$.  There are three possible cases:
\begin{description}
\item [$a_i\neq c_i$, $b_i=c_i$]  Such substitutions contribute to a decrease of distance between the strings, and we refer to them as \emph{beneficial} substitutions.  There are $d_H(a,c)=n$ of letters $a_i$ in $a$ such that $a_i\neq c_i$.  The number of $r_+\in[0,n]$ beneficial substitutions out of $n$ letters in $a$ is ${n\choose r_+}$.
\item [$a_i=c_i$, $b_i\neq c_i$]  Such substitutions contribute to an increase of distance between the strings, and we refer to them as \emph{deleterious} substitutions.  There are $l-d_H(a,c)=l-n$ of letters $a_i$ in $a$ such that $a_i=c_i$, and each of them can be substituted into $\alpha-1$ letters $b_i\neq c_i$.  The number of $r_-\in[0,l-n]$ deleterious substitutions out of $l-n$ letters in $a$ is $(\alpha-1)^{r_-}{l-n\choose r_-}$.
\item [$a_i\neq c_i$, $b_i\neq c_i$]  Such substitutions do not change the distance between the strings, and we refer to them as \emph{neutral} substitutions.  After $r_+$ beneficial substitutions, there are $d_H(a,c)-r_+=n-r_+$ of letters $a_i$ in $a$ such that $a_i\neq c_i$, and each of them can be substituted into $\alpha-2$ letters $b_i\neq c_i$.  The number of $r_0\in[0,n-r_+]$ neutral substitutions out of $n-r_+$ letters in $a$ is $(\alpha-2)^{r_0}{n-r_+\choose r_0}$.
\end{description}
The product of these three numbers gives the total number of $r_+$ beneficial, $r_-$ deleterious and $r_0$ neutral substitutions.  For fixed points $a$ and $c$ with $d_H(a,c)=n$, the third point $b\in S(a,r)\cap S(c,m)$ can be obtained using different values of $r_+$, $r_-$ and $r_0$.  However, not all values of $r_+$, $r_-$ and $r_0$ are admissible.

First, let us establish the range for $r_+$ and $r_-$.  Using the triangle inequalities for Hamming metric we have $d_H(a,c)\leq d_H(a,b)+d_H(b,c)$ and $|d_H(a,b)-d_H(b,c)|\leq d_H(a,c)$, or (substituting the values $n$, $m$ and $r$):
\[
|r-m|\leq n\leq r+m
\]
This gives inequalities $r+m-n\geq0$ and $n-|r-m|\geq0$.  Observe that
\[
(r+m-n)+(n-|r-m|)=2\min\{r,m\}
\]

If $r+m-n=0$, then $n=m+r$, which means that all $\min\{r,m\}$ substitutions are beneficial.  Thus, $r_+$ is bounded above by $(n-|r-m|)/2$.

If $n-|r-m|=0$, then $\max\{r,m\}=n+\min\{r,m\}$, which means that all $\min\{r,m\}$ substitutions are deleterious.  Thus, $r_-$ is bounded above by $(r+m-n)/2$.

Given $r_+\in[0,\lfloor (n-|r-m|)/2\rfloor]$ and $r_-\in[0,\lfloor (r+m-n)/2\rfloor]$, the number $r_0$ of neural substitutions is computed from the condition:
\[
r_++r_-+r_0=\min\{r,m\}
\]
Finally, because neutral substitutions do not change the distances, the difference $r_+-r_-$ represents the change of the distance (i.e. the change $d_H(a,c)- d_H(b,c)=n-m$, if $r\leq m$).  Thus,
\[
r_+-r_-=n-\max\{r,m\}
\]
These conditions are the required guards in the summation~(\ref{eq:h-intersection-a}).\qed
\end{proof}

\section{Memoryless communication}
\label{sec:memoryless}

Let us consider an $X\times Y$-valued stochastic process $\{(x_t,y_t)\}_{t\geq0}$, where $(X,\mathcal{X})$ and $(Y,\mathcal{Y})$ are measurable sets.  Our interest is in the `similarity' between the marginal processes $\{x_t\}_{t\geq0}$ and $\{y_t\}_{t\geq0}$ under special assumptions on the communication between $X$ and $Y$.  Recall that a Markov \emph{transition kernel} from $(X,\mathcal{X})$ to $(Y,\mathcal{Y})$ is a conditional probability measure $P(Y_i\mid x)$ on $(Y,\mathcal{Y})$, which is $\mathcal{X}$-measurable for each $Y_i\in\mathcal{Y}$.  We shall use measure-theoretic notation $dP(y\mid x)$ for transition kernel $P(Y_i\mid x)=\int_{Y_i}dP(y\mid x)$.

\begin{proposition}
Let $(X,\mathcal{X})$ and $(Y,\mathcal{Y})$ be measurable sets, and let $\{(x_t,y_t)\}_{t\geq0}$ be a $X\times Y$-valued stochastic process such that elements of the marginal process $\{y_t\}_{t\geq0}$ are conditionally independent given $\{x_t\}_{t\geq0}$:
\[
dP(y_t,\ldots,y_0\mid x_t,\ldots,x_0)=dP(y_t\mid x_t)\cdots dP(y_0\mid x_0)
\]
Then transition kernel $dP(y_{t+1}\mid y_t)$ can be expressed as a composition of transition kernels $dP(x_t\mid y_t)$, $dP(x_{t+1}\mid x_t)$ and $dP(y_{t+1}\mid x_{t+1})$ as follows:
\[
dP(y_{t+1}\mid y_t)=\int\limits_{x_{t+1}\in X}\int\limits_{x_t\in X} dP(y_{t+1}\mid x_{t+1})\,dP(x_{t+1}\mid x_t)\,dP(x_t\mid y_t)
\]
This transition kernel has the following properties:
\begin{enumerate}
\item If $X$ and $Y$ are statistically independent, then $y_{t+1}\in Y$ is independent of $y_t\in Y$: $dP(y_{t+1}\mid y_t)=dP(y_{t+1})$
\item If $dP(x\mid y)$ corresponds to a function $x=h(y)$, then
\[
dP(y_{t+1}\mid y_t)=\frac{dP(y_{t+1})}{P\{h^{-1}\circ h(y_{t+1})\}}\,dP(x_{t+1}=h(y_{t+1})\mid x_t=h(y_t))
\]
\item If $dP(y\mid x)$ corresponds to a function $y=g(x)$, then
\[
dP(y_{t+1}\mid y_t)=\frac{1}{P\{g^{-1}(y_t)\}}\int\limits_{x_{t+1}\in g^{-1}(y_{t+1})}\int\limits_{x_t\in g^{-1}(y_t)}dP(x_{t+1}\mid x_t)\,dP(x_t)
\]
\item If $dP(y\mid x)$ corresponds to a bijection $y=h(x)$, then
\[
dP(y_{t+1}\mid y_t)=dP(x_{t+1}=h(y_{t+1})\mid x_t=h(y_t))
\]
\end{enumerate}
\label{pr:induced-kernel}
\end{proposition}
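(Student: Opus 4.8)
The plan is to establish the displayed decomposition of $dP(y_{t+1}\mid y_t)$ first, and then recover the four enumerated properties as special cases obtained by substituting the appropriate form of the backward kernel $dP(x_t\mid y_t)$ and the forward emission kernel $dP(y_{t+1}\mid x_{t+1})$. The only consequence of the full memoryless hypothesis that I actually need is the \emph{pairwise} conditional independence $dP(y_{t+1},y_t\mid x_{t+1},x_t)=dP(y_{t+1}\mid x_{t+1})\,dP(y_t\mid x_t)$, which I would obtain by marginalising the factorised joint conditional over the intermediate outputs $y_{t-1},\dots,y_0$ (each emission factor integrates to one) and using that each $y_s$ depends only on the corresponding $x_s$.

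To derive the main formula I would write the joint law of $(y_{t+1},y_t)$ as the integral over $(x_{t+1},x_t)$ of the full joint, factorised using the channel structure and the Markov kernel on $\{x_t\}$ as $dP(y_{t+1}\mid x_{t+1})\,dP(y_t\mid x_t)\,dP(x_{t+1}\mid x_t)\,dP(x_t)$. The key algebraic move is a Bayes inversion in the form $dP(y_t\mid x_t)\,dP(x_t)=dP(x_t\mid y_t)\,dP(y_t)$, which lets the marginal $dP(y_t)$ factor out of the integral; dividing through by $dP(y_t)$ then leaves exactly the claimed triple integral. This is the whole content of the proposition, and everything else is specialisation.

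For the enumerated cases I would substitute and simplify. When $X$ and $Y$ are independent, $dP(y_{t+1}\mid x_{t+1})=dP(y_{t+1})$ and $dP(x_t\mid y_t)=dP(x_t)$ pull through, and the remaining integral of a transition kernel against a probability measure equals one, giving $dP(y_{t+1})$. When the backward kernel is the deterministic map $x=h(y)$ it is a point mass collapsing the $x_t$-integral to $x_t=h(y_t)$, while the forward kernel, being uniform on the fibre $h^{-1}(x_{t+1})$, is supported where $x_{t+1}=h(y_{t+1})$ and carries the weight $1/|h^{-1}\circ h(y_{t+1})|$. The case of a deterministic forward map $y=g(x)$ is symmetric: the emission kernel restricts $x_{t+1}$ to the fibre $g^{-1}(y_{t+1})$, and the uniform backward kernel contributes the averaging factor $1/|g^{-1}(y_t)|$ over $g^{-1}(y_t)$. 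A bijection makes both kernels point masses, so both integrals collapse and only a single evaluation of $dP(x_{t+1}\mid x_t)$ survives.

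I expect the main obstacle to be measure-theoretic rigour in the general (uncountable) setting rather than the algebra: the Bayes inversion and the notion of a conditional uniformly distributed on a fibre $h^{-1}(x)$ are unproblematic in the countable case treated throughout the paper, but in the continuous case they require care with regular conditional probabilities and with the existence of the relevant densities. Since the applications in Section~\ref{sec:monotonic} are all on the finite Hamming space, I would give the proof for discrete or dominated measures, where all the integrals are sums or Radon--Nikodym derivatives against a common reference measure, and then remark that the uncountable case follows under standard regularity assumptions.
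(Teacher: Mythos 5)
Your proposal is correct, and all four special cases are handled exactly as in the paper; the difference lies in how you reach the main triple-integral formula. You factorise the \emph{unconditional} four-variable joint as $dP(y_{t+1}\mid x_{t+1})\,dP(y_t\mid x_t)\,dP(x_{t+1}\mid x_t)\,dP(x_t)$ and then perform a single Bayes inversion $dP(y_t\mid x_t)\,dP(x_t)=dP(x_t\mid y_t)\,dP(y_t)$, pulling $dP(y_t)$ out of the integral and dividing. The paper instead stays inside conditional measures throughout: it chain-rules $dP(y_{t+1},x_{t+1},x_t\mid y_t)$ into $dP(y_{t+1}\mid x_{t+1},x_t,y_t)\,dP(x_{t+1}\mid x_t,y_t)\,dP(x_t\mid y_t)$ and then proves two separate identities via Bayes and the conditional independence, namely $dP(y_{t+1}\mid x_{t+1},x_t,y_t)=dP(y_{t+1}\mid x_{t+1})$ and $dP(x_{t+1}\mid x_t,y_t)=dP(x_{t+1}\mid x_t)$. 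Your route is shorter and makes the single essential use of Bayes transparent, at the cost of an explicit division by $dP(y_t)$ — a Radon--Nikodym step that you correctly flag as requiring the discrete or dominated setting; the paper's route avoids any explicit division and yields as by-products the two intermediate conditional-independence identities, which are precisely the hidden-Markov-model structure invoked in its closing remark. Two small points in your favour and one caution: you explicitly derive the pairwise conditional independence $dP(y_{t+1},y_t\mid x_{t+1},x_t)=dP(y_{t+1}\mid x_{t+1})\,dP(y_t\mid x_t)$ from the stated hypothesis by integrating out $y_{t-1},\dots,y_0$, a step the paper uses without comment; and your reading of the kernels in cases 2--4 (point masses versus uniform fibre weights) matches the paper's exactly. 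The caution: your phrase ``the Markov kernel on $\{x_t\}$'' should not be read as assuming the $x$-process is Markov --- none of the argument needs that, since $dP(x_{t+1}\mid x_t)\,dP(x_t)$ is just the disintegration of the two-time joint law, and the paper's remark stresses that no Markov property is required anywhere in the proposition.
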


\begin{proof}
Transition kernel $dP(x_{t+1}\mid x_t)$ can generally be expressed as follows:
\[
dP(y_{t+1}\mid y_t)
=\int\limits_{x_{t+1}\in X}\int\limits_{x_t\in X}dP(y_{t+1}\mid x_{t+1},x_t,y_t)\,dP(x_{t+1}\mid x_t,y_t)\,dP(x_t\mid y_t)
\]
Using the Bayes formula and conditional independence gives
\begin{eqnarray*}
dP(y_{t+1}\mid x_{t+1},x_t,y_t)&=&\frac{dP(y_{t+1},y_t\mid x_{t+1},x_t)}{\int\limits_{y_{t+1}\in Y}dP(y_{t+1},y_t\mid x_{t+1},x_t)}\\
&=&\frac{dP(y_{t+1}\mid x_{t+1})\,dP(y_t\mid x_t)}{\int\limits_{y_{t+1}\in Y}dP(y_{t+1}\mid x_{t+1})\,dP(y_t\mid x_t)}=dP(y_{t+1}\mid x_{t+1})
\end{eqnarray*}
\begin{eqnarray*}
dP(x_{t+1}\mid x_t,y_t)&=&\int\limits_{y_{t+1}\in Y}dP(y_{t+1},x_{t+1}\mid x_t,y_t)\\
&=&\int\limits_{y_{t+1}\in Y}\frac{dP(y_{t+1},y_t\mid x_{t+1},x_t)\,dP(x_{t+1}\mid x_t)}{dP(y_t\mid x_t)}\\
&=&\int\limits_{y_{t+1}\in Y}\frac{dP(y_{t+1}\mid x_{t+1})\,dP(y_t\mid x_t)\,dP(x_{t+1}\mid x_t)}{dP(y_t\mid x_t)}\\
&=&dP(x_{t+1}\mid x_t)
\end{eqnarray*}
Thus, $dP(y_{t+1}\mid y_t)$ can be expressed using the composition of transition kernels $dP(y_{t+1}\mid x_{t+1})\,dP(x_{t+1}\mid x_t)\,dP(x_t\mid y_t)$.  We now consider four important cases.
\begin{enumerate}
\item If $X$ and $Y$ are independent, then $dP(y_{t+1}\mid x_{t+1})=dP(y_{t+1})$ and $dP(x_t\mid y_t)=dP(x_t)$, and therefore
\[
dP(y_{t+1}\mid y_t)=dP(y_{t+1})\int\limits_{x_{t+1}\in X}\int\limits_{x_t\in X}dP(x_{t+1}\mid x_t)\,dP(x_t)=dP(y_{t+1})
\]
\item If $x=h(y)$, then
\[
dP(x_t\mid y_t)=\delta_{h(y_t)}(x_t)\,,\qquad
dP(y_{t+1}\mid x_{t+1})=\frac{dP(y_{t+1})}{P\{h^{-1}\circ h(y_{t+1})\}}
\]
which gives the resulting expression.
\item If $y=g(x)$, then
\[
dP(x_t\mid y_t)=\frac{dP(x_t)}{P\{g^{-1}(y_t)\}}\,,\qquad
dP(y_{t+1}\mid x_{t+1})=\delta_{g(x_{t+1})}(y_{t+1})
\]
The resulting expression is obtained by integrating $dP(x_{t+1}\mid x_t)$ for each $x_{t+1}\in g^{-1}(y_{t+1})$ and $x_t\in g^{-1}(y_t)$.
\item Follows from $h^{-1}\circ h(y)=y$ for a bijection.
\end{enumerate}\qed
\end{proof}

\begin{remark}
\label{rem:cond-independence}
The assumption of conditional independence $dP(y_t,\ldots,y_0\mid x_t,\ldots,x_0)=dP(y_t\mid x_t)\cdots dP(y_0\mid x_0)$ is common in theory of non-linear filtering or hidden Markov models \citep{Stratonovich59:_nonlinear}, and it is equivalent to the assumption that the observed process $\{y_t\}_{t\geq0}$ does not provide more information about the hidden process $\{x_t\}_{t\geq0}$ than $x$ itself:
\[
dP(x_{t-1}\mid x_t,y_t)=dP(x_{t-1}\mid x_t)
\]
The idea is that $y$ is a `noisy' version of $x$.  Then, using the Bayes formula gives:
\[
dP(y_t\mid x_t,x_{t-1})=\frac{dP(x_{t-1}\mid x_t,y_t)}{dP(x_{t-1}\mid x_t)}\,dP(y_t\mid x_t)=dP(y_t\mid x_t)
\]
Note that it is also very common to assume that the hidden process $\{x_t\}_{t\geq0}$ is Markov, and together the two assumptions imply that the joint process $\{(x_t,y_t)\}_{t\geq0}$ is also Markov (but the observed process $\{y_t\}_{t\geq0}$ is usually not Markov, but a conditional Markov process).  Note that it is not required in Proposition~\ref{pr:induced-kernel} for any of the stochastic processes $\{(x_t,y_t)\}_{t\geq0}$, $\{x_t\}_{t\geq0}$ or $\{y_t\}_{t\geq0}$ to be Markov.  In the context of Section~\ref{sec:monotonic}, the unobserved variable $x\in X$ is the distance to optimum $d(\top,\omega)$, and observed variable $y\in Y$ is fitness.
\end{remark}

\section{Proof of Theorem~\ref{th:weak-mon}}
\label{sec:weak-mon}

Let us consider a sequence $\{B_n\}_{n\in\bN}$ of closed balls $B_n:=B[\top,\delta_n]=\{\omega:d(\top,\omega)\leq\delta_n\}$ of decreasing radia $\{\delta_n\}_{n\in\bN}$ in a metric space $(\Omega,d)$, such that the balls are nested:
\[
B_1\supset B_2\supset\cdots\supset B_{n-1}\supset B_n\supset\cdots
\]
The difference $S_n:=B_{n-1}\setminus B_n$ will be referred to as a `sphere' of radius $\delta_n$.  Let $P$ be a Borel probability measure on $\Omega$ (i.e. related to the topology on $\Omega$).  We shall denote by $P_n$ the conditional probability $P(E\mid\omega\in S_n)$ associated with the sphere $S_n$:
\[
P_n\{E\}:=\int_{E\cap S_n}dP(\omega\mid\omega\in S_n)=\frac{P\{E\cap S_n\}}{P\{S_n\}}
\]
Given a $P$-measurable function $f:\Omega\to\bR$, we shall denote by $\bE[f(\omega_n)]$ the conditional expectation $\bE\{f(\omega)\mid\omega\in S_n\}$:
\[
\bE[f(\omega_n)]:=\int_{S_n} f(\omega)\,dP(\omega\mid\omega\in S_n)
\]
For example, if $P$ is a uniform distribution, then $P_n(\omega)=1/|S_n|$ and $\bE[f(\omega_n)]$ is the average value of $f(\omega)$ in $S_n$.

Similarly, we consider a sequence $\{Y_n\}_{n\in\bN}$ of intervals $Y_n\subseteq\bR$ and conditional probabilities $P(E\mid\omega\in f^{-1}(Y_n))$ defined by their pre-images $f^{-1}(Y_n)=\{\omega:f(\omega)\in Y_n\}$.  We shall denote by $\bE[d(\top,\omega_n)]$ the conditional expectation $\bE\{d(\top,\omega)\mid\omega\in f^{-1}(Y_n)\}$:
\[
\bE[d(\top,\omega_n)]:=\int_{f^{-1}(Y_n)} d(\top,\omega)\,dP(\omega\mid\omega\in f^{-1}(Y_n))
\]
For example, if $P$ is a uniform distribution, then $\bE[d(\top,\omega_n)]$ is the average distance $d(\top,\omega)$ in $f^{-1}(Y_n)$.  We now prove the theorem.

\begin{proof}[Theorem~\ref{th:weak-mon}]
For convenience, we shall assume that $\top$ is a global optimum, so that condition $\omega\in E\subseteq\Omega$ can be omitted.  If $\Omega$ contains multiple optima $\top$, then by $d(\top,\omega)$ we understand the distance to the nearest optimum.
\begin{description}
\item [($\Rightarrow$)] Let $\{\varepsilon_n\}_{n\in\bN}$ be a decreasing sequence of $\varepsilon_n>0$ (i.e. $\varepsilon_n>\varepsilon_{n+1}$), and let $\delta_n>0$ be defined as follows:
\[
\delta_n:=\sup\{\delta>0:d(\top,\omega)<\delta\ \Rightarrow\ f(\omega)>f(\top)-\varepsilon_n\}
\]
Such $\delta_n>0$ exists for each $\varepsilon_n>0$ by continuity of $f$ at $\top$, and because $f(\top)\geq f(\omega)$ for all $\omega$.  The sequence $\{\delta_n\}_{n\in\bN}$ is also decreasing (otherwise $\{\varepsilon_n\}_{n\in\bN}$ is non-decreasing or $f$ is not continuous at $\top$).  Observe that we already have monotonicity with respect to balls $B_n\subset B_{n-1}$ in the following sense:
\[
\delta_n\geq\delta_m\quad\Rightarrow\quad\varepsilon_n\geq\varepsilon_m
\]
We need to prove monotonicity of conditional expectations $\bE[f(\omega_n)]$ within spheres $S_n=B_{n-1}\setminus B_n$.  Let $n\leq m$ so that $d(\top,\omega_n)\geq d(\top,\omega_m)$.  There are two mutually excluding cases:
\[
\bE[f(\omega_n)]\leq\bE[f(\omega_m)]\quad\mbox{or}\quad
\bE[f(\omega_n)]>\bE[f(\omega_m)]
\]
The first case corresponds to monotonicity and non-negative difference $\bE[f(\omega_m)]-\bE[f(\omega_n)]\geq0$; otherwise, the difference is negative $\bE[f(\omega_m)]-\bE[f(\omega_n)]<0$.  Using the Markov inequality for conditional probability $P_n\{f(\top)-f(\omega)\geq\varepsilon_m\}$ and the fact that $\bE[f(\omega_n)]>f(\top)-\varepsilon_n$ we derive the following bounds for $\bE[f(\omega_n)]$:
\[
\varepsilon_mP_n\{f(\top)-f(\omega)\geq\varepsilon_m\}\leq f(\top)-\bE[f(\omega_n)]<\varepsilon_n
\]
On the other hand $0\leq f(\top)-\bE[f(\omega_m)]<\varepsilon_m$.  These inequalities allow us to give the following bounds on the difference $\bE[f(\omega_m)]-\bE[f(\omega_n)]$:
\[
\varepsilon_mP_n\{f(\top)-f(\omega)\geq\varepsilon_m\}-\varepsilon_m<\bE[f(\omega_m)]-\bE[f(\omega_n)]<\varepsilon_n
\]
Substituting $P_n\{f(\top)-f(\omega)\geq\varepsilon_m\}=1-P_n\{f(\top)-f(\omega)<\varepsilon_m\}$ we obtain
\[
-\varepsilon_mP_n\{f(\top)-f(\omega)<\varepsilon_m\}<\bE[f(\omega_m)]-\bE[f(\omega_n)]<\varepsilon_n
\]
Conditional probability $P_n\{f(\top)-f(\omega)<\varepsilon_m\}$ converges to zero for any decreasing sequence $\varepsilon_m\to0$, which proves that the probability of non-negative difference $\bE[f(\omega_m)]-\bE[f(\omega_n)]\geq0$ converges to one.  In other words, the implication $d(\top,a)\geq d(\top,b)\Rightarrow\bE[f(a)]\leq\bE[f(b)]$ is true with probability one as $d(\top,b)\to0$.

\item [($\Leftarrow$)] Consider the function $d_\top(\omega):=d(\top,\omega)=\delta$.  The pre-image $d_\top^{-1}([0,\delta))=\{\omega:d(\top,\omega)<\delta\}$ of each open interval $[0,\delta)$ is an open ball $B[\top,\delta)$.  Because $f$ maps open balls $B[\top,\delta)$ to open intervals $(f(\top)-\varepsilon,f(\top)]$ by our assumption, the composition $f\circ d_\top^{-1}$ is an open mapping of open intervals $[0,\delta)$.  Therefore, the inverse function $(f\circ d_\top^{-1})^{-1}=d_\top\circ f^{-1}$ is continuous at $f(\top)$.  This means that for any decreasing sequence $\{\delta_n\}_{n\in\bN}$ we can construct the corresponding decreasing sequence $\{\varepsilon_n\}_{n\in\bN}$ by setting
\[
\varepsilon_n:=\sup\{\varepsilon>0:f(\omega)>f(\top)-\varepsilon\ \Rightarrow\ d(\top,\omega)<\delta_n\}
\]
The rest of the proof is identical to that of the first implication.  Specifically, using Markov inequality we derive the following bounds for $\bE[d(\top,\omega_n)]$:
\[
\delta_mP_n\{d(\top,\omega)\geq\delta_m)\}\leq\bE[d(\top,\omega_n)]<\delta_n
\]
Using bounds $0\leq\bE[d(\top,\omega_m)]<\delta_m$ we obtain the following bounds on the difference $\bE[d(\top,\omega_n)]-\bE[d(\top,\omega_m)]$:
\[
-\delta_mP_n\{d(\top,\omega)<\delta_m\}\leq\bE[d(\top,\omega_n)]-\bE[d(\top,\omega_m)]<\delta_n
\]
Because conditional probability $P_n\{d(\top,\omega)<\delta_m\}$ converges to zero for any decreasing sequence $\delta_m\to0$, this proves that the probability of non-negative difference $\bE[d(\top,\omega_n)]-\bE[d(\top,\omega_m)]\geq0$ converges to one.  In other words, the implication $f(a)\leq f(b)\Rightarrow\bE[d(\top,a)]\geq\bE[d(\top,b)]$ is true with probability one as $f(b)\to\sup f$.

\item [($\iff$)] If $f$ is both continuous and open at $\top$, then both implications are true in probability, which means that $f$ and $d$ are weakly isomorphic at $\top$.
\end{description}\qed
\end{proof}

\section{Supplementary figures}
\label{sec:supp-figures}

\includepdf[pages={1-3}]{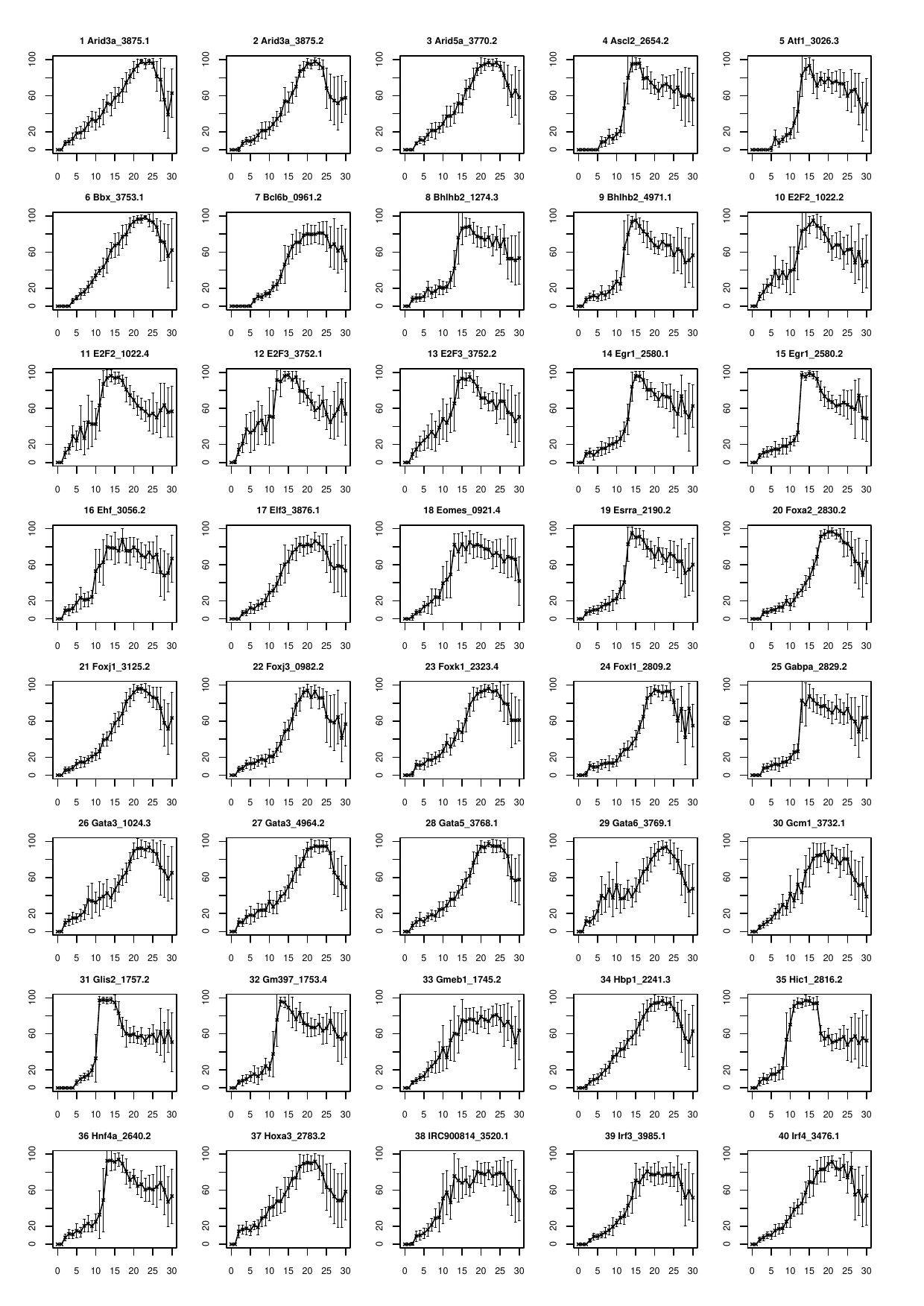}
\begin{figure}[!hb]
\caption{Optimal mutation rate control functions evolved by a Meta-GA on the transcription factor DNA-binding landscapes from \citep{Badis09}.  Ordinates show mutation rates, and abscissae show the binding scores.  Each panel corresponds to a different transcription factor.  Lines connect the average mutation rates obtained in 16 independent trials on a particular landscape.  Errorbars represent standard deviations from the mean.  The GAs do not spend much time at low binding scores meaning that the results become more random.}
\label{fig:all-curves}
\end{figure}

\includepdf[pages={1-3}]{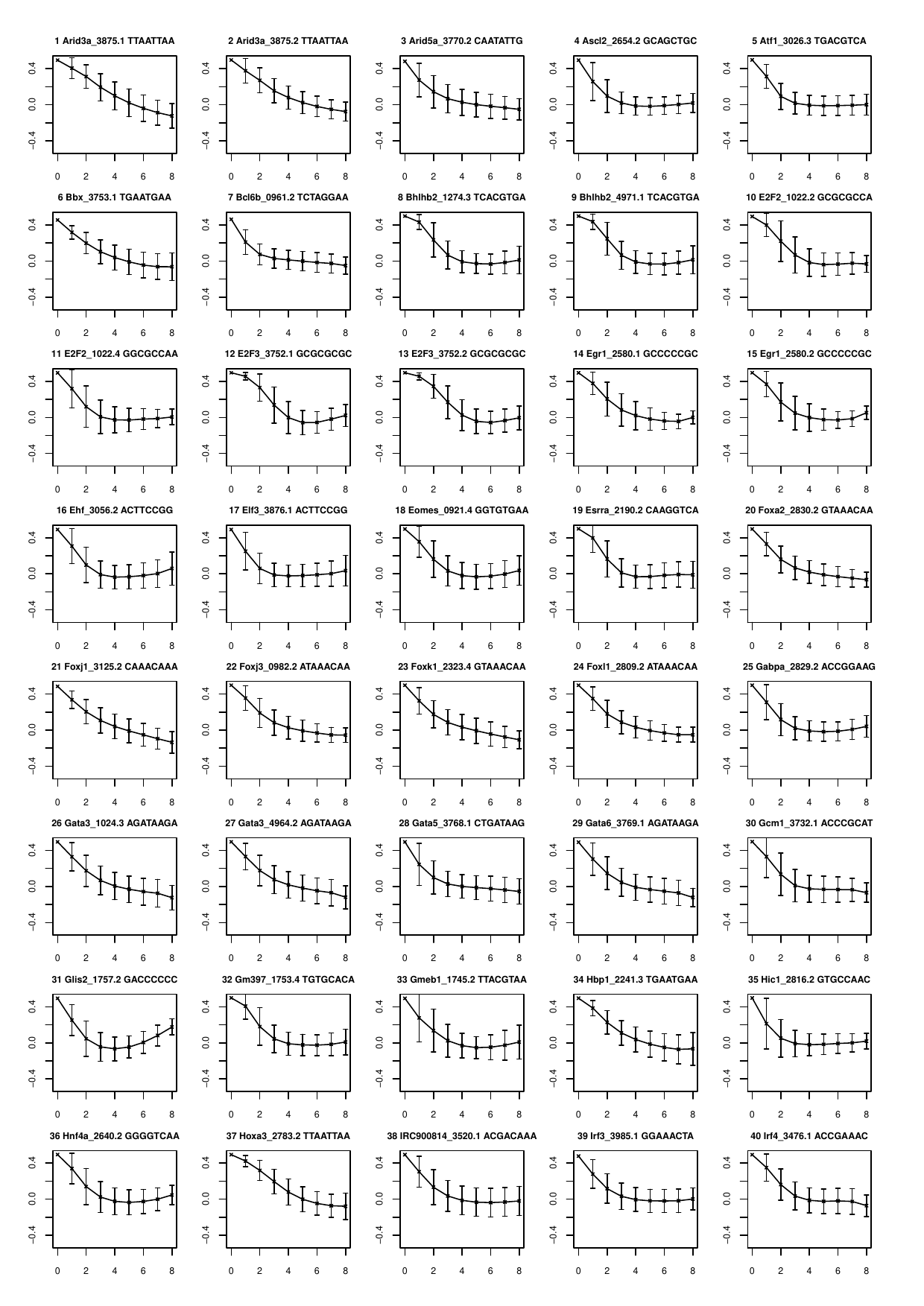}
\begin{figure}[!hb]
\caption{Landscapes of binding score between 8 base-pair DNA sequences and transcription factors (TF) from \citep{Badis09}.  Ordinates show binding scores, and abscissae show Hamming distances from the top string (a sequence with the highest DNA-TF binding score).  Each panel corresponds to a different transcription factor.  Lines connect mean values of the binding score for each value of the Hamming distance from the top string.  Errorbars represent standard deviations.  Note that this dataset does not distinguish between sequences on opposite strands of the DNA.  Therefore, a sequence and its reverse complement are shown only once and the Hamming distance shown is either to the top string or its reverse complement, whichever is the closer.}
\label{fig:all-fitness-local}
\end{figure}

\end{document}